\journal{ }
\newtheorem{theorem}{Theorem}
\newtheorem{lemma}{Lemma}
\theoremstyle{definition}
\newtheorem{example}{Example}
\newtheorem{definition}{Definition}
\newcommand{\RR}{\mathbb{R}}
\newcommand{\PP}{\mathcal{P}}
\newcommand{\OO}{\mathcal{O}}
\newcommand{\ZZ}{\mathcal{Z}}
\newcommand{\UU}{\mathcal{U}}
\newcommand{\FF}{\mathcal{F}}
\newcommand{\es}{\text{ES}} 
\newcommand{\IR}{\mathcal{I}}
\newcommand{\CC}{\mathcal{C}} 
\patchcmd{\emailauthor}{(#2)}{}{}{}
\patchcmd{\urlauthor}{(#2)}{}{}{}
\begin{document}
\begin{frontmatter}

\title{Multi-unit Assignment under Dichotomous Preferences}
\author{Josue Ortega\corref{mycorrespondingauthor}\footnote{This paper is based on the second chapter of my PhD thesis supervised by Herv\'e Moulin, to whom I am indebted for his valuable suggestions. I have also received helpful comments from Anna Bogomolnaia, Yeon-Koo Che, Bram Driesen, Georg N\"{o}ldeke, Fedor Sandomirskiy, Erel Segal-Halevi, Jay Sethuraman, Olivier Tercieux, and audiences at the Winter Meeting of the Econometric Society in Barcelona and the Tenth Conference on Economic Design at York. Special thanks to Columbia University for their kind hospitality. I thank Erin and Christina Loennblad for proofreading the paper. Errors are mine alone.}}
\ead{josue.ortega@essex.ac.uk}
\address{University of Essex, Colchester, UK.\\
Center for European Economic Research (ZEW), Mannheim, Germany.}

\begin{abstract}
I study the problem of allocating objects among agents without using money. Agents can receive several objects and have dichotomous preferences, meaning that they either consider objects to be acceptable or not. In this set-up, the egalitarian solution is more appealing than the competitive equilibrium with equal incomes because it is Lorenz dominant, unique in utilities, and group strategy-proof. Moreover, it can be adapted to satisfy a new fairness axiom that arises naturally in this context. Both solutions are disjoint.
\end{abstract}

\begin{keyword}
multi-unit assignment \sep dichotomous preferences \sep Lorenz dominance \sep competitive equilibrium with equal incomes.\\
{\it JEL Codes:} C78, D73.
\end{keyword}
\end{frontmatter}

\newpage
\section{Introduction}
An assignment problem is an allocation problem where scarce objects are to be allocated among several agents without using monetary transfers. Assignment problems include the allocation of senators to committees, courses to students, or job interviews to applicants. In this paper, I study assignment problems in which each agent can receive more than one object, but at most one unit of each, and several identical units are available of each object. These are called multi-unit assignment problems. They include the three examples previously discussed. A U.S. senator on average participates in four committees,\footnote{Source: \href{https://corg.indiana.edu/many-roles-member-congress}{``The many roles of a Member of Congress''}, {\it Indiana University Center on Representative Government}.} a student can take many courses during a semester, and a job candidate can schedule many interviews. However, senators cannot have more than one seat on each committee, students cannot take a course twice for credit, and applicants cannot be interviewed more than once for the same position. 

For such multi-unit assignment problems, we would like to have a systematic (probabilistic) procedure to decide fairly which agents should get which objects, which, at the same time, does not offer incentives to coalitions of agents to lie about their true preferences. My contribution is to propose an egalitarian solution that achieves this purpose for multi-unit assignment problems in the dichotomous preference domain, in which objects are either considered acceptable or not, and in which agents are indifferent between all objects that they find acceptable. 

The egalitarian solution is based on the well-known leximin principle. In the domain of dichotomous preferences, it performs better than the celebrated competitive equilibrium with equal incomes, a solution used in similar assignment models on larger preference domains \citep{hylland1979,budish2011, reny2017}, and which has been successfully applied to allocate courses in business schools \citep{budish2017}. By better, I mean that, unlike the competitive equilibrium with equal incomes, the egalitarian solution is Lorenz dominant, unique in utilities, and impossible to manipulate by groups. In contrast with the single-unit assignment problem \citep{bogomolnaia2004}, both solutions are disjoint, meaning that in general we cannot obtain the \emph{egalitarian} solution as a competitive equilibrium when agents are endowed with \emph{equal incomes}. 

Lorenz dominance is ``{\it a ranking generally accepted as the unambiguous arbiter of inequality comparison}'' \citep{foster1999} and is ``{\it widely accepted as embodying a set of minimal ethical judgements that should be made}'' \citep{dutta1989}. Given two vectors of size $n$, the first Lorenz dominates the second if, when arranged in ascending order, the sum of the first $k\leq n$ elements of the first is always greater than or equal to the sum of the $k$ first elements of the second. A utility profile is Lorenz dominant if it Lorenz dominates any other feasible utility profile. In our set-up, the fact that a utility profile is Lorenz dominant implies that it uniquely maximizes any strictly concave utility function representing agents' preferences and is, therefore, a strong fairness property.

Uniqueness of the solution (in the utility profile obtained) is also a desirable property, for it gives a clear recommendation of how the resources should be split. A multi-valued solution leaves the schedule designer with the complicated task of selecting a particular division among those suggested by the solution, thus raising the possibility of justified complaints by some agents who may argue that other allocations were also recommended by the solution that were more beneficial to them.

It is equally interesting that the egalitarian solution is group strategy-proof, implying that coalitions of agents can never profit from misrepresenting their availability. On the contrary, the competitive solution is manipulable by groups in this set-up, as in many others. Yet, it is remarkable that even in our small dichotomous preference domain, where the possibilities to misreport are very limited, the pseudo-market solution can still be manipulated by coalitions of agents.

The fact that the egalitarian solution satisfies these three desirable properties is a strong argument for recommending its use whenever agents have dichotomous preferences, instead of the competitive equilibrium with equal incomes. 

The dichotomous preference domain is admittedly simple, and is not suitable for modeling some multi-unit assignment problems in which agents may consider objects as complements, such as the allocation of courses to students. However, this set-up is helpful to represent scheduling problems (see for example the tennis allocation problem in \citealp{maher2016}), in which agents are either compatible or incompatible with each object and want to maximize the number of objects they obtain, or for the aforementioned problems of assigning job interviews to candidates or seats for performances to the public, among others.

Moreover, focusing on this particular domain of preferences will be helpful to show the properties of the egalitarian solution, while, at the same time, it will make the problem complicated enough to identify why the competitive equilibrium with equal incomes fails to be unique and group strategy-proof. The reason behind its non-uniqueness is that for some objects, the number of identical copies available of them (their supply) equals their total demand. I call these objects perfect. Although there is no doubt on how perfect objects should be allocated, the question of how to price them becomes tricky. Because their demand is always equal to their supply, they can have a zero competitive price. However, they could also have a positive price, hence reducing the available budget of those agents who buy them. 

Perfect objects are also the reason why the competitive solution is not group strategy-proof. A coalition of agents can agree to misrepresent their preferences in order to make a set of objects perfect. This allows those objects to have a lower price (even a zero price), thus	 allowing agents to increase their budget, and, consequently, their share of other over-demanded objects. Manipulating agents benefit unambiguously, meaning that every competitive equilibrium of the allocation problem with misrepresented preferences yields a weakly better allocation than the unique competitive equilibrium of the original problem.

More generally, perfect objects also raise the issue of how they should affect the final allocation. Some allocation procedures can be decomposed into the allocation of perfect and over-demanded objects, meaning that the share of over-demanded objects that agents obtain is independent of their demand for perfect objects. I call this property independence of perfect objects. This is a desirable property in scenarios where agents can claim that perfect objects belong unambiguously to them and the number of perfect objects they obtain should not diminish their shares of over-demanded objects. Although the egalitarian solution does not satisfy this requirement, we can construct a refined egalitarian solution that does and is also Lorenz dominant for the assignment problem with over-demanded objects only. However, independence of perfect objects comes at a price: the refined egalitarian solution is not group strategy-proof.

\subsection{Related Literature} 
The theoretical model I study is closely related to two existing problems in the literature:

\noindent
\begin{enumerate}[leftmargin=0cm,itemindent=.5cm,labelwidth=\itemindent,labelsep=0cm,align=left]
\item {\it Single-unit random assignment with dichotomous preferences} by \cite{bogomolnaia2004}, henceforth BM04. Our model generalizes theirs in that agents can receive more than one object. They study the egalitarian and the equal income competitive solution. They show that the egalitarian solution is Lorenz dominant and can always be supported by competitive prices. Because the competitive solution is Lorenz dominant, the competitive solution can easily be computed as the maximization of the Nash product of agents' utilities. They also prove that the egalitarian solution is group strategy-proof.

\cite{roth2005} show that the egalitarian solution is also Lorenz dominant in assignment problems on arbitrary graphs that are not necessarily bipartite. They use dichotomous preferences to model whether a person is compatible with a particular organ for transplantation. In their words, {``\it the experience of American surgeons suggests that preferences over kidneys can be well approximated as 0 - 1, i.e. that patients and surgeons should be more or less indifferent among kidneys from healthy donors that are blood type and 	immunologically compatible with the patient''}.

Assignment on the dichotomous domain of preferences has been further studied by \citet{bogomolnaia2005}, \citet{katta2006}, and \citet{bouveret2008}.

\cite{kurokawa2015} also study a single-unit assignment problem in which agents can derive a utility equal to one or zero if their demand is met. This is, if an agent demands 10 objects, he obtains the same zero utility if he receives 9, 2 or 0 objects, whereas in this paper  agents' utility is linear on the goods they find acceptable: each extra unit has the same marginal utility. They consider a broader preference domain which satisfies four conditions, namely convexity, equality, shifting allocations, and optimal utilization. They show that the egalitarian solution is group strategy-proof, envy-free and unique in utilities for every multi-unit assignment problem in this domain of preferences. They show that the egalitarian solution is not Lorenz dominant in this larger domain.

They allow for non individually rational allocations. In their set-up, an agent who wants an apple but who dislikes a pear may in fact get the pear. Furthermore, they impose that the agent derives the same utility from obtaining the apple and the pear, or the apple alone. In contrast, in BM04 and the model in this paper, agents cannot receive objects they do not find acceptable (and thus they do not have preferences over them). More technically, the property of \emph{shifting allocations} does not apply in our set-up because agents may receive non individually rational allocations.

\item {\it The course allocation problem (CAP)} described by \cite{brams2001,budish2011,budish2012,kominers2010,krishna2008}; and \cite{sonmez2010}, with some important differences. First, in CAP, students may have arbitrary preferences over the set of objects, which are considerably more complex than those I study in this paper. However, reporting combinatorial preferences is infeasible for even few alternatives, and, in practice, combinatorial mechanisms never allow agents to fully report such preferences, not only because such revelation would be complicated, but also because agents may not know their preferences in such detail. Consequently, a new strand of theory has focused on allocation mechanisms with simpler preferences (including linear utilities and dichotomous preferences, e.g. \citealp{bouveret2016} and \citealp{bogomolnaia2017}). Although the dichotomous preference domain is smaller than those considered in CAP, it is not contained in any of those because CAP rules out indifferences. 

Furthermore, \cite{budish2011} only considers deterministic assignments. I instead study randomized assignments: in practice, many allocation mechanisms use some degree of randomization to achieve a higher degree of fairness.\footnote{Randomization is used to assign both permanent visas and housing subsidies in the US, or school places in the UK. Sources: \href{https://www.theguardian.com/us-news/2017/may/02/green-card-lottery-us-immigration-trump-agenda}{``A one in a million chance at a better life''}, {\it The Guardian}, 2/5/2017, \href{http://www.npr.org/2016/05/03/476559490/why-does-random-chance-decide-who-gets-housing-subsidies}{``Why does random chance decide who gets housing subsidies?''}, {\it NPR}, 3/5/2016, and \href{https://www.theguardian.com/education/2017/mar/14/school-admissions-lottery-system-brighton}{``School admissions: is a lottery a fairer system?''}, {\it The Guardian}, 14/3/2017.}
\end{enumerate}

\subsection{Summary of Results}
\label{sec:summary}

I define the egalitarian and the constrained competitive solution. The egalitarian one is Lorenz dominant in the set of efficient utility profiles (Theorem \ref{theorem:lorenz}), while the competitive one exists (Theorem \ref{theorem:existence}) but is multi-valued (Example \ref{tab:example1}). The egalitarian solution is group strategy-proof, but the competitive one is not (Theorem \ref{theorem:manipulation}). Both solutions are disjoint (Example \ref{tab:example2}), which is a stark difference between this model and BM04. 

As a consequence, the classic result stating that the competitive solution can be computed as the maximizer of the Nash product of utilities no longer holds: a result known as the Eisenberg-Gale program.  Its failure is important not only because it leaves us with no known algorithm for computing competitive equilibria, but also because the Eisenberg-Gale program is a rather robust result that applies to a large class of utility functions beyond the linear case \citep{vazirani2007} and to the division of goods and bads \citep{bogomolnaia2017}. The fact that the competitive solution is not unique is also interesting, as a unique utility profile is always obtained in Fisher markets.

I show that the egalitarian solution is not independent of perfect objects, and propose a refined egalitarian solution that achieves this property, while at the same time being Lorenz dominant for the assignment of over-demanded objects (Lemma \ref{thm:refined}). This refined solution, while appealing, violates group strategy-proofness, unlike the classical egalitarian solution (Example \ref{tab:example3}). 

This paper is structured as follows. Sections \ref{sec:model} and \ref{sec:solutions} formalize the model and the solutions I consider, respectively. Section \ref{sec:manipulation} analyses the solutions' manipulation by groups, whereas Section \ref{sec:perfect} introduces the property of independence of perfect objects. Section \ref{sec:conclusion} concludes the paper. All proofs are provided in the Appendix. 

\section{Model}
\label{sec:model}
I consider the allocation of $m$ objects (each with possibly several copies of itself) to $n$ agents. Up to $q_k$ copies of object $k \in M$ can be assigned to the set of agents $N$. I refer to the integer vector $q=(q_1,\ldots, q_m)$ as {\bf objects' capacities}.

Agents' preferences over objects are given by a $m \times n$ binary matrix $R$. Each entry $r_{ik}=1$ if agent $i$ finds object $k$ acceptable and 0 otherwise.\footnote{Depending on the application we have in mind, $R$ can also be understood to represent either allocation or physical constraints.} Slightly abusing the notation, $R_{iM}$ (resp. $R_{Nk}$) denotes both the $i$-th row (resp. $k$-th column) of $R$ and the set of objects (resp. agents) for which $r_{ik}=1$. I assume $\left\vert R_{Nk} \right\vert \geq q_k$ for each object $k$.\footnote{If an object is under-demanded, the additional copies can always be thrown away. If we eliminate this assumption, then we need to consider assignments which are not individually rational, in which agents receive objects that they do not want.}

A {\bf random assignment matrix (RAM)} for an MAP $(R,q)$ is an $m \times n$ matrix satisfying the following conditions $\forall i \in N,k \in M$
\begin{eqnarray}
\text{Feasibility}&&\begin{cases}
0 \leq z_{ik}\leq 1\\
\sum_{k \in M} z_{ik} \leq q_k\\
\end{cases}\\
\text{Individual Rationality}&&\begin{cases}
z_{ik} >0 \text{ only if } r_{ik}=1\\
\end{cases}
\end{eqnarray}

An RAM's entries indicate what probability each agent has of obtaining one unit of each object. The feasibility conditions ensure that no agent obtains more than one unit of each object, and that the total number of units assigned of each object is less than its capacity. Similarly, individual rationality guarantees that each agent only obtains shares from acceptable objects. Throughout the paper, I only consider assignments satisfying these two properties. As before, the notation $Z_{iM}$ (resp. $Z_{Nk}$) denotes both the $i$-th row (resp. $k$-th column) of $Z$ and the set of objects (resp. players) for which $z_{ik}=1$. $\FF(R,q)$ denotes the set of all RAMs for the MAP $(R,q)$. The {\bf matching size} $\nu(R,q)=\sum_{k \in M} q_{k}$ of an MAP represents the maximum number of object units that can be assigned.

Several random assignments can have the same corresponding RAM. Theorem 1 in \cite{budish2013} implies\footnote{The implication follows because the set of feasibility constraints is a hierarchy. Lemma 1 is an extension of the well-known Birkhoff-von Neumann decomposition theorem.} that

\begin{lemma}
	Any RAM can be decomposed into a convex combination of binary RAMs, and can thus be implemented.
\end{lemma}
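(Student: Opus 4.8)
The plan is to show that the feasibility constraints defining $\FF(R,q)$ form a bihierarchy (in fact a hierarchy) of sets, so that the integer-decomposition theorem of \cite{budish2013} applies directly. Concretely, given an RAM $Z=(z_{ik})$, I want to write $Z=\sum_{t} \lambda_t Z^{(t)}$ with $\lambda_t\ge 0$, $\sum_t \lambda_t=1$, and each $Z^{(t)}$ a binary matrix that is itself an RAM for $(R,q)$, i.e. satisfies $z^{(t)}_{ik}\in\{0,1\}$, $\sum_{k\in M} z^{(t)}_{ik}\le 1$ (wait --- the relevant row/column constraints), the column-sum capacity bound $\sum_{i} z^{(t)}_{ik}\le q_k$, and $z^{(t)}_{ik}=0$ whenever $r_{ik}=0$.

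The key steps, in order, are as follows. First, I would identify the family of constraint sets that must be preserved under rounding: for each agent $i$ the set of cells in row $i$ (capturing $0\le z_{ik}\le 1$ together with the ``at most one unit of each object'' reading), for each object $k$ the set of cells in column $k$ (capturing $\sum_i z_{ik}\le q_k$), and the singleton cells $\{(i,k): r_{ik}=0\}$ whose entries are pinned to $0$ by individual rationality. Second, I would observe that any two of these sets are either nested or disjoint: two rows are disjoint, two columns are disjoint, a row and a column meet in a single cell (so one does not ``cross'' the other in the bihierarchy sense because we may place rows in one laminar family and columns in the other), and the forbidden singletons are contained in their row and column --- so the whole collection is a bihierarchy. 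Third, I would invoke Theorem 1 of \cite{budish2013} (the General Lottery-Decomposition / Birkhoff--von Neumann generalization), which states that any fractional matrix satisfying all the constraints with integer right-hand sides can be written as a convex combination of integer matrices each satisfying the same constraints; since all our right-hand sides ($0$, $1$, $q_k$) are integers, each matrix in the decomposition is a binary RAM. Fourth, I would note ``and can thus be implemented'': the convex weights $\lambda_t$ define a lottery over deterministic assignments whose expectation is $Z$, which is exactly what it means to implement a random assignment.

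The main obstacle --- and really the only nontrivial point --- is verifying the bihierarchy (hierarchy) structure carefully, since Budish--Che--Kojima--Milgrom's theorem fails for arbitrary systems of linear constraints. The footnote in the excerpt already asserts that ``the set of feasibility constraints is a hierarchy,'' so the burden is just to make the row/column laminarity explicit and to check that the individual-rationality constraints (upper and lower bounds of $0$ on the forbidden cells) fit in as degenerate nested sets rather than breaking the structure. Everything after that is a direct citation plus the trivial remark that a convex combination of $0/1$ assignments is a lottery.
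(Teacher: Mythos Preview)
Your proposal is correct and follows the same approach as the paper, which simply cites Theorem~1 of \cite{budish2013} and notes in a footnote that ``the set of feasibility constraints is a hierarchy.'' One small clarification: in this model there are no row-sum constraints (agents may receive many objects), so the relevant constraint family is just the singleton cells $\{(i,k)\}$ with bounds $[0,1]$ (or $[0,0]$ when $r_{ik}=0$) nested inside the disjoint column sets with bounds $[0,q_k]$; this is a single laminar family---a hierarchy---which is why the paper does not need the full bihierarchy machinery you invoke, though of course a hierarchy is trivially a bihierarchy and your argument goes through unchanged.
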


I assume that agents are indifferent between objects that they find acceptable, and that they want to maximize the number of acceptable objects they obtain. The canonical utility function representing those preferences is
\begin{equation}
u_i(Z)=\sum_{k \in M} z_{ik} 
\end{equation}

for an arbitrary agent $i$. This function is clearly not unique but it is convenient to work with. The preference relation represented by this function is a complete order over all RAMs, and implies that an RAM $Z$ is Pareto optimal for an MAP $(R,q)$ if and only if $\sum_{i \in N} \sum_{k \in M} z_{ik}= \nu(R,q)$. The set of efficient utility profiles $\UU(R,q)$ can be described as
\begin{equation}
\UU(R,q)=\{U \in \RR^n \mid \exists \, Z \in \FF(R,q): U_i = \sum_{k \in M} z_{ik}, \; \forall i \in N \}
\end{equation}

I do not distinguish between ex-ante and ex-post efficiency because in the dichotomous preference domain they coincide. This equivalence occurs because the sum of utilities is constant in all efficient assignments.\footnote{Ex-ante and ex-post efficiency are equivalent in assignment problems with dichotomous preferences (BM04, \citealp{roth2005}).} In our set-up, efficiency simply requires that no object is wasted.

A {\bf welfarist solution} is a mapping $\Phi$ from $(R,q)$ to a set of efficient utility profiles in $\UU(R,q)$, and hence, it only focuses on the expected number of objects received by an agent and not on the exact probability distribution over deterministic assignments. Whenever a solution is single-valued I instead use the notation $\phi$.

\subsection{Perfect Objects and Perfect Extensions}
We can partition the corresponding set of objects $M$ into two subsets $\PP(R,q)$ and $\OO(R,q)$, which are called {\bf perfect} and {\bf over-demanded}, respectively. The set of perfect objects is defined as
\begin{equation}
\label{eq:perfect} \PP(R,q) =\{k \in M : \left\vert R_{Nk} \right\vert = q_k\}\\
\end{equation}
The vectors $q_{\PP(R,q)}$ and $q_{\OO(R,q)}$ denote the capacities of perfect and over-demanded goods, respectively.

Given a MAP $(R,q)$, a {\bf perfect extension} for agent $i$ represents adding an arbitrarily perfect object $k'$ that agent $i$ finds acceptable. Formally, a perfect extension for agent $i$ in a MAP ($R,q$) is a pair $([R \, R_{Nk'}], \overline q)$ where $[R \, R_{Nk'}]$ denotes the $n \times (m+1)$ juxtaposition of the two matrices and $\overline q=(q_1,\ldots,q_m, \left\vert R_{Nk'} \right\vert)$. 

\section{Three Efficient Solutions}
\label{sec:solutions}

\subsection{The Egalitarian Solution}
 An intuitive solution equalizes agents' utilities as much as possible respecting efficiency and individual rationality: this is the well-known leximin solution. I refer to it as the {\bf Egalitarian Solution (ES)}, proposed theoretically by BM04, and applied to the exchange of live donor kidneys for transplant by \cite{roth2005} and \cite{yilmaz2011}.

To define it formally, let $\succ^l$ be the well-known lexicographic order.\footnote{So that for any two vectors $U,U' \in \RR^n$, $U\succ^l U'$ only if $U_t > U'_t$ for some integer $t \leq n$, and $U_p=U'_p$ for any positive integer $p< t$.} For each $U \in \RR^n$, let $\gamma(U) \in \RR^n$ be the vector containing the same elements as $U$ but sorted in ascending order, i.e. $\gamma_1(U)\leq \ldots \leq \gamma_n(U)$. The leximin order $\succ^{LM}$ is defined by $U\succ^{LM} U'$ if and only if $\gamma(U) \succ^l \gamma(U')$. The ES is defined by
\begin{equation}
\phi^\es(R,q)= \arg \max_{\succ^{LM}} \,  \UU (R,q)
\end{equation}

The ES satisfies a strong fairness notion called {\bf Lorenz dominance}, defined as follows. Define the order $\succ^{ld}$ on $\RR^n$ so that for any two vectors $U$ and $U'$, $U\succ^{ld} U'$ only if $\sum_{i=1}^t U_i \geq \sum_{i=1}^t U'_i$ $\; \forall t\leq n$, with strict inequality for some $t$. We say that $U$ Lorenz dominates $U'$, written $U \succ^{LD} U'$, if $\gamma(U) \succ^{ld} \gamma(U')$. A vector $U \in \UU(R,q)$ is Lorenz dominant for an MAP $(R,q)$ if it Lorenz dominates any other vector in $\UU(R,q)$. 

Lorenz dominance is a partial order in $\UU(R,q)$ and therefore a Lorenz dominant utility profile need not exist. Nevertheless, the ES solution is Lorenz dominant.

\begin{theorem}
\label{theorem:lorenz}
The ES solution is Lorenz dominant in the set of efficient utility profiles.
\end{theorem}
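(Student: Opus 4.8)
The plan is to show that the leximin-optimal utility profile $\phi^{\es}(R,q)$ Lorenz dominates every other efficient profile. Since all profiles in $\UU(R,q)$ have the same coordinate sum $\nu(R,q)$ (efficiency means no object is wasted), Lorenz domination reduces to comparing the partial sums $\sum_{i=1}^{t}\gamma_i(U)$ of the sorted vectors for each $t<n$: I must prove that for the leximin-optimal $U^{\es}$ and any other efficient $U$, one has $\sum_{i=1}^{t}\gamma_i(U^{\es}) \geq \sum_{i=1}^{t}\gamma_i(U)$ for all $t$. First I would note that leximin optimality is well-defined because $\UU(R,q)$ is a compact set (it is the image of the compact polytope $\FF(R,q)$ under a continuous linear map), so the $\arg\max_{\succ^{LM}}$ exists and is a single point.

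The core of the argument is a structural claim about $\UU(R,q)$: the feasible utility region of a dichotomous multi-unit assignment problem is a polymatroid-type object, or at least an integral polytope with the exchange property that for any two feasible profiles $U, U'$ and any index $i$ with $U_i < U'_i$, there is a feasible profile $U''$ with $U''_i = U_i + 1$, $U''_j = U'_j - 1$ for some $j$ with $U'_j > U_j$, and $U''_\ell = U'_\ell$ elsewhere. This is the standard "augmenting path" phenomenon for bipartite $b$-matchings: given two RAMs realizing $U$ and $U'$, the symmetric-difference argument on the fractional assignments produces an alternating structure along which one unit of utility can be transferred from an over-served agent to an under-served agent while preserving feasibility and individual rationality. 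Once this exchange property is in hand, the conclusion follows from the classical fact that on any such "transfer-accessible" set with constant coordinate sum, the leximin point is Lorenz dominant: if some efficient $U$ had a strictly larger partial sum $\sum_{i=1}^{t}\gamma_i(U) > \sum_{i=1}^{t}\gamma_i(U^{\es})$ for the smallest such $t$, one could use the exchange property to move a unit of utility into one of the $t$ lowest-ranked coordinates of $U^{\es}$ from a higher coordinate, producing a profile that is leximin-strictly-better than $U^{\es}$, a contradiction.

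Concretely I would carry out the steps in this order: (i) establish existence and uniqueness of $\phi^{\es}(R,q)$ via compactness; (ii) reduce Lorenz domination to the partial-sum inequalities using constancy of $\sum_i U_i$ on $\UU(R,q)$; (iii) prove the one-unit exchange lemma for $\UU(R,q)$ by taking two RAMs, looking at the difference matrix, and extracting an alternating path/cycle decomposition — here I would invoke that the feasibility constraints form a hierarchy (as already used for Lemma 1) so that integral structure is available, and that individual rationality is preserved along the path since every agent on the path already accepts the relevant objects; (iv) run the contradiction argument: suppose $U^{\es}$ is not Lorenz dominant, take a witness $U$ and the minimal $t$ violating a partial-sum inequality, apply the exchange lemma to push utility from a coordinate of $U^{\es}$ above rank $t$ down to one at or below rank $t$, and check that the resulting sorted vector is lexicographically larger, contradicting leximin-maximality.

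The main obstacle is step (iii), the exchange lemma: one must be careful that the alternating path extracted from the difference of two RAMs can be chosen so that transferring a unit of flow along it keeps all entries in $[0,1]$, respects the column capacities $q_k$, and only touches agents who find the relevant objects acceptable — and that a single unit (rather than an arbitrarily small amount) can be transferred, which is where the hierarchy/integrality of the constraint system is essential. A secondary subtlety is handling ties in the sorted vectors $\gamma(U)$ when identifying "the $t$ lowest coordinates," which is routine but needs a consistent tie-breaking convention so that the lexicographic comparison in step (iv) is valid. Everything else — compactness, the reduction via constant sum, the final contradiction — is standard once the exchange property is secured.
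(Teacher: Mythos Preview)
Your route is genuinely different from the paper's. The paper does not argue via augmenting paths at all: it writes down the submodular set function $\mu(S)=\sum_{k\in M^{+}(S)}|R_{Sk}|+\sum_{k\in M^{-}(S)}q_k$ (the maximum total utility coalition $S$ can secure), notes that $\UU(R,q)$ sits inside the ``core from above'' $C(R,q)=\{x:\sum_{N}x_i=\nu(R,q),\ \sum_{S}x_i\le\mu(S)\ \forall S\}$, and then simply invokes Theorem~3 of \cite{dutta1989}, which guarantees that this anti-core has a Lorenz-dominant element coinciding with the egalitarian point. Your plan is, in effect, a hands-on reconstruction of the Dutta--Ray argument in this particular setting; it buys self-containment at the price of length.

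Two points in your sketch need repair. First, the exchange lemma as written is inconsistent: with $U''_i=U_i+1$, $U''_j=U'_j-1$ and $U''_\ell=U'_\ell$ otherwise, the coordinate sum of $U''$ equals $\nu(R,q)-(U'_i-U_i)$, so $U''\notin\UU(R,q)$ whenever $U'_i>U_i+1$. You presumably mean the base-polytope exchange $U''=U+\epsilon(e_i-e_j)$ with $U_j>U'_j$; and since $\UU(R,q)$ contains genuinely fractional points (e.g.\ Example~1), insisting on a unit transfer and invoking ``integrality of the constraint hierarchy'' is the wrong lever --- an $\epsilon$-transfer along an alternating path in the difference of two RAMs is what you actually obtain and all you need. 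Second, and more substantively, step~(iv) does not close as stated. The pairwise exchange only hands you \emph{some} $j$ with $U^{\es}_j>U_j$; it does not guarantee $U^{\es}_i<U^{\es}_j$, which is what makes the transfer a leximin improvement. (On a generic convex constant-sum set the leximin point need not be Lorenz dominant: take the segment between $(2,2,2,6)$ and $(1,3,4,4)$.) What you really need is the \emph{set} version --- if $S$ is the set of agents at or below the $t$-th order statistic of $U^{\es}$ and $\sum_{S}U^{\es}_i<\mu(S)$, then one can push $\epsilon$ into some $i\in S$ from some $j\notin S$ --- and establishing this is precisely where the submodularity of $\mu$ (equivalently, the polymatroid structure of $\UU(R,q)$) enters. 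Once you isolate $\mu$ and its submodularity you are one citation of \cite{dutta1989} away from the conclusion, which is exactly the shortcut the paper takes.
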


I prove Theorem \ref{theorem:lorenz} using Theorem 3 in \citet{dutta1989}, which states that the core of every supermodular cooperative game has a Lorenz dominant element. The construction of the corresponding cooperative game can be found in the Appendix.

\subsection{The Constrained Competitive Equilibrium with Equal Incomes}

A second solution, which is substantially more complicated, requires to balance the supply and demand for goods when agents are endowed with equal budgets. These equal budgets are often normalized to one currency unit, a normalization that I also use. This solution is known as the {\bf Competitive Equilibrium with Equal Incomes (CEEI)} \citep{varian1974,hylland1979}. In MAPs, each agent can consume at most one unit of each object, hence having particular constraints on their consumption set. I use the term {\bf Constrained Competitive Equilibrium} ({\bf CCE}, still with equal incomes) from now on to make this distinction obvious. The CCE solution is different from the CEEI as defined in \citet{hylland1979} in that in our case agents never partially consume objects that have different prices (see Table 1 in their paper).\footnote{That is, if an apple pie has a higher price than a pear pie, and the agent values them equally, the agent either fully consumes the pear pie and eats some or all of the apple pie, or the agent only consumes some of the pear pie and none of the apple pie. What can never occur in a CCE is that an agent consumes some, but not all, of the pear pie and some, but not all, of the apple pie.} This distinction justifies the different terminology of CCE.

\begin{definition}
A CCE for an MAP ($R,q$) is a pair of an RAM $Z^*$ and a non-negative price vector $p^*$ such that, $\forall i \in N$, agents maximize their utilities
\begin{equation}
Z_{iM}^* \in \arg \max_{Z_{iM} \in \beta_i(p^*)} u_i(Z_{iM})
\end{equation}

where $\beta_i(p)$ is the budget set defined as $\beta_i(p)=\{Z_{iM}\mid \sum_{k \in M} z_{ik} \leq \left\vert R_{iM} \right\vert: p \cdot Z_{iM} \leq 1\}$, and the market clears, so that 
\begin{equation}
Z^* \in \FF(R,q)
\end{equation}
\end{definition}

As we shall see in Theorem \ref{theorem:existence}, the set of CCE is never empty but may be large. The optimality conditions of CCE imply
\begin{eqnarray}
\label{eq:0} k \notin \PP(R,q) &\implies& p^*_k >0\\
\label{eq:1} z^*_{ik}, z^*_{ik'} \in (0,1) &\implies& p^*_k = p^*_{k'}\\
\label{eq:2} [p^*_k < p^*_{k'}] \land [ 0<z^*_{ik'}] &\implies& z^*_{ik} = 1\\
\label{eq:3} \sum_k z^*_{ik} < \left\vert R_{iM} \right\vert &\implies& \sum_k p^*_k \cdot z^*_{ik} = 1		
\end{eqnarray}

These are the equivalent of the Fisher equations in our model, see \cite{brainard2005}. Condition (\ref{eq:0}) allows a zero price only for perfect objects, while expression (\ref{eq:1}) forces the same marginal benefit for every object that agents obtain partially but not fully.

The CCE is in general multivalued. Given an MAP, I denote the set of pairs ($Z^*,p^*$) as $\CC(R,q)$. The CCE solution is defined by 
\begin{equation}
\Phi^{CCE}(R,q)=\{ u(Z') \mid \exists \, p' : (Z',p') \in \CC(R,q)\}
\end{equation}

\subsection{The Naive Egalitarian per Object}
\label{sec:epo}
Finally, a naive and highly intuitive solution (that I use as a benchmark only) breaks up the allocation problem into $m$ sub-problems of assigning $q_k$ units of object $k$ into $R_{Nk}$, distributing an equal share of object $k$ among all agents who find it acceptable. I call this solution {\bf Egalitarian Per Object (EPO)}. Given an MAP $(R,q)$, the EPO solution assigns to each agent
\begin{equation}
\phi^{EPO}_i(R,q)= \sum_{k \in M} r_{ik} \cdot \frac{q_k}{\left\vert R_{Nk} \right\vert}
\end{equation}

In the dichotomous preference domain, EPO is equivalent to the well-known random priority mechanism, also known as random serial dictatorship.\footnote{EPO would not be efficient in a more general domain of preferences. The equivalence with random priority would also disappear.} I do not consider EPO to be an appropriate solution for MAPs because it ignores the interaction between the $m$ assignment problems corresponding to each object. EPO also fails the following basic fairness property: if $n-1$ agents receive at least one object, the $n$-th agent also receives at least one object; see Example 1 for an illustration.

One could also consider other solutions discussed in the literature, in particular the probabilistic serial rule, defined by \cite{bogomolnaia2001}. I do not consider this solution for two reasons. First, the probabilistic serial rule is appealing in scenarios where different notions of efficiency do not coincide. This is not the case for MAPs. Second, it was originally defined for assignment problems with strict preferences. Even though \cite{katta2006} extend the probabilistic serial rule to allow for indifferences, their extension is only defined for single-unit assignment problems.

\subsection{Two Examples Showing that All Solutions Are Disjoint}
\begin{example}[Multivalued CCE differs from EPO]Table \ref{tab:example1} shows the different outcomes that our three solutions produce for a problem with $n=6$, $m=3$, and $(R,q)$ given in subtable \ref{tab:r1.1}. The CCE utilities are written in brackets in subtable \ref{tab:r1.2} because there are CCE that support utility profiles between $(2.4,1.4,1)$ and $(2.25,2,1)$ with $0 \leq p_\gamma \leq \frac{4}{9}$. This multiplicity is interesting: the competitive solution is always unique in the corresponding utility profile in Fisher markets (\citealp{jain2010}, p.87). It is also problematic, as there is no obvious selection from the CCE.
\end{example}

\begin{table}[ht]
\centering
\caption{CCE is multi-valued.}
\label{tab:example1}
\begin{subtable}[t]{0.45\linewidth}
\centering
\begin{tabular}{|l|l|l|l|l|}
\hline
$N \backslash M$    & $\alpha$ & $\beta$ & $\gamma$ & Total  \\
\hline
$a:d$     & 1   & 1    & 1   & 3   \\
$e$	    & 1   & 1    & 0   & 2     \\
$f$	    & 1   & 0    & 0   & 1    \\
\hline
Total & 6   & 5    & 4   &    \\
\hline
\hline
$q$ & 4   & 4    & 4   &    \\
\hline
\end{tabular}
\caption{Corresponding $R$ matrix.}
\label{tab:r1.1}
\end{subtable}
\qquad 
\begin{subtable}[t]{0.45\linewidth}
\centering
\begin{tabular}{|l|l|l|l|}
\hline
$N$    		& ES & CCE & EPO \\
\hline
$a$:$d$ 		& 2.25 				& [2.25 - 2.4]  & 2.47	\\
$e$	    	& 2  					& [1.4 - 2]  		& 1.47     \\
$f$   			& 1						& 1    					& 0.67   \\
\hline
\end{tabular}
\caption{Utility profiles for each solution.}
\label{tab:r1.2}
\end{subtable}
\end{table}

Any CCE in Example 1 gives one unit of object $\alpha$ to agent $f$. This implies that there are no CCE prices that support the EPO outcome and thus is a strong argument against this solution, as competitive equilibria are considered ``{\it essentially the description of perfect justice}'' \citep{arnsperger1994}, and the base of Dworkin's ``{\it equality of resources}'' \citep{dworkin1981}. In consequence, the EPO solution is not ideal. But interestingly, the ES solution can also produce outcomes that cannot be supported as a CCE.

\begin{example}[ES differs from CCE] I show this using a MAP with $n=9$, $m=6$, and $(R,q)$ given in subtable \ref{tab:r2.1}. Note that in the single-unit case (Theorem 1 in BM04), the ES is always supported by competitive prices.
\end{example}

\begin{table}[ht]
\centering
\caption{ES and CCE are disjoint.}
\label{tab:example2}
\begin{subtable}[t]{0.45\linewidth}
\centering
\begin{tabular}{|l|l|l|l|l|l|}
\hline
$N \backslash M$    & $\alpha$ & $\beta$ & $\gamma,\delta$ & $\epsilon,\zeta$ &Total  \\
\hline
$a:c$     & 1   & 1    & 0     & 0      & 2   \\
$d$		  & 0   & 1    & 1     & 0        & 3     \\
$e$	    & 0   & 1    & 0     & 1        & 3   \\
$f:i$	    & 1   & 0    & 1     & 1      & 5     \\
\hline
Total & 7   & 5    & 5   & 5 & \\
\hline
\hline
$q$ & 4   & 4    & 4   & 4&  \\
\hline
\end{tabular}
\caption{Corresponding $R$ matrix.}
\label{tab:r2.1}
\end{subtable}
\quad 
\begin{subtable}[t]{0.45\linewidth}
\centering
\begin{tabular}{|l|l|l|l|l|l|}
\hline
$\alpha$ & $\beta$ & $\gamma,\delta$ & $\epsilon,\zeta$ &Total \\
\hline
1   & 0.97    &  0   &  0   & 1.97     \\
0   & 0.54    & 1     &  0    & 2.54     \\
0   & 0.54    & 0      &  1    & 2.54     \\
0.25   & 0    & 0.75   & 0.75    & 3.25     \\
\hline
\end{tabular}
\caption{Corresponding $Z^*$ (CCE).}
\label{tab:r2.2}
\end{subtable}
\end{table}

If the ES solution (2, 2.5, 2.5, 3.25) could be supported as a CCE, then $p_\alpha=p_\gamma=p_\delta=p_\epsilon=p_\zeta$ because agents $f$:$i$ obtain those objects with positive probability but do not exhaust them. Furthermore, agents $d$:$i$ must spend their whole budget, implying prices $p_\alpha=\frac{4}{13}$ and $p_\beta=\frac{10}{13}$. However, at such prices, the ES utility for agents $a$:$c$ is unaffordable $(\frac{14}{13} > 1)$. 

The fact that ES and CCE do not coincide is interesting: in the non constrained context, the competitive solution can be computed by maximizing the Nash product, solving what is known as the Eisenberg-Gale program \cite[][see chapter 7 in \citealp{moulin2003} for a textbook treatment]{eisenberg1961,eisenberg1959,chipman1974}. That the competitive solution cannot be computed solving the Eisenberg-Gale program implies that we lack an algorithm for computing the competitive equilibrium. The Eisenberg-Gale program is otherwise a rather robust result since it extends to a large family of utility functions beyond the linear case \citep{jain2010}, as well as to the mixed division of objects and bads \citep{bogomolnaia2017}. 

The multiplicity of the competitive solution and its non-equivalence with the egalitarian outcome justify the new terminology of CCE. For any MAP, the set of CCE is non-empty, a result that I prove in the Appendix using a standard fixed point argument. I summarize these findings in Theorem 2.

\begin{theorem}
\label{theorem:existence}
The ES solution is well-defined and single-valued, and the CCE solution exists. Their intersection can be empty.
\end{theorem}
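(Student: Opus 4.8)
The plan is to establish the three claims in turn.

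\textbf{$\phi^\es$ is well-defined and single-valued.} First I would note that $\FF(R,q)$ is a nonempty compact convex polytope, hence so is its image $\UU(R,q)$ under the linear map $Z\mapsto\bigl(\sum_{k}z_{ik}\bigr)_{i\in N}$. Existence of a leximin-maximal element follows from the usual nested construction: maximize the smallest order statistic $\gamma_1$ (a continuous function) over the compact set $\UU(R,q)$, then $\gamma_2$ over that compact argmax, and so on for $n$ steps; any profile surviving all $n$ stages is leximin-maximal, so $\phi^\es$ is defined, and it is Pareto efficient because raising a coordinate cannot decrease any order statistic, so $\phi^\es(R,q)\in\UU(R,q)$ and is efficient. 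For single-valuedness, suppose $U\neq U'$ were both leximin-maximal; then $\gamma(U)=\gamma(U')$, so $U'=PU$ for a permutation $P\neq I$ and $\bar U:=\tfrac12(U+U')=\tfrac12(I+P)U\in\UU(R,q)$. As $\tfrac12(I+P)$ is doubly stochastic, $\bar U$ is majorized by $U$: $\sum_{i=1}^{t}\gamma_i(\bar U)\geq\sum_{i=1}^{t}\gamma_i(U)$ for all $t$, with equality at $t=n$. If some inequality is strict, $\bar U$ strictly Lorenz-dominates $U$, contradicting leximin-maximality; if all are equalities, a short argument on the level sets of $U$ (each must be mapped to itself by $P$) forces $PU=U$, i.e. $U'=U$, again a contradiction. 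Hence $\phi^\es$ is single-valued. (Alternatively, granting Theorem~\ref{theorem:lorenz}, $\phi^\es(R,q)$ is the unique maximizer over the convex set $\UU(R,q)$ of any fixed strictly increasing strictly concave separable function, which gives single-valuedness at once.)

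\textbf{A CCE exists.} I would use a standard Kakutani fixed-point argument of Gale--Nikaido--Debreu type, adapted to the bounded consumption sets $\{Z_{iM}:z_{ik}\leq 1\}$. Normalize prices to $\Delta=\{p\geq 0:\sum_k p_k=1\}$. For $p$ in the interior of $\Delta$ each agent's demand $D_i(p)=\arg\max_{Z_{iM}\in\beta_i(p)}u_i(Z_{iM})$ is the optimal face of a linear program over a polytope varying continuously with $p$, hence nonempty, compact, convex, and upper hemicontinuous by Berge's theorem. I would then apply Kakutani to the correspondence on $\Delta$ times the (compact) allocation space that pairs each agent's demand response with the price-update $p\mapsto\arg\max_{p'\in\Delta}p'\cdot\bigl(\sum_i Z_{iM}-q\bigr)$; a fixed point $(Z^*,p^*)$ is then individually optimal and, by the usual Walras-law bookkeeping, market-clearing, hence a CCE, and the conditions (\ref{eq:0})--(\ref{eq:3}) hold as noted in the text. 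The one non-routine point is that $\beta_i(\cdot)$ fails to be continuous where a price coordinate vanishes; I would handle this by the usual device of first solving on the shrunken simplices $\{p\in\Delta:p_k\geq\varepsilon\}$ and passing to the limit $\varepsilon\to 0$, checking the limit is still a CCE. I expect this fixed-point-and-limiting step to be the main technical obstacle.

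\textbf{The intersection can be empty.} I would take the MAP of Table~\ref{tab:example2}. First verify that its ES is $U^*=(2,2.5,2.5,3.25)$ (one value per type of agent): it is efficient since $\sum_i U^*_i=24=\nu$, and it is leximin-maximal because each type-$a$ agent must hold one unit of $\alpha$ and one of $\beta$ to reach utility $2$, which exhausts three of $\beta$'s four units and caps the combined utility of $d$ and $e$ at $5$; given this ceiling, leximin pins down $d=e=2.5$ and the four type-$f$ agents at $3.25$ each (one may also invoke Theorem~\ref{theorem:lorenz}). Now suppose $(Z^*,p^*)$ were a CCE with $u(Z^*)=U^*$. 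Since $\sum_i U^*_i=\nu$, every object is fully used, and the feasibility bookkeeping then forces, in any such $Z^*$: $z^*_{a\alpha}=z^*_{a\beta}=1$ for the type-$a$ agents; hence only one unit of $\beta$ remains for $d,e$, and as each needs at least $\tfrac12$ of $\beta$ to reach $2.5$, $z^*_{d\beta}=z^*_{e\beta}=\tfrac12$, forcing $z^*_{d\gamma}=z^*_{d\delta}=z^*_{e\epsilon}=z^*_{e\zeta}=1$; the type-$f$ agents then hold, among the four of them, one unit of $\alpha$ and three units of each of $\gamma,\delta,\epsilon,\zeta$. Averaging $Z^*$ over permutations of the four interchangeable type-$f$ agents --- which leaves feasibility and the bundles of $a,b,c,d,e$ untouched and preserves optimality at $p^*$ for the type-$f$ agents, since they share preferences and budget and demand sets are convex --- makes each type-$f$ agent hold every object of $\{\alpha,\gamma,\delta,\epsilon,\zeta\}$ with probability strictly between $0$ and $1$; by (\ref{eq:1}) these five prices coincide, say $=\pi$. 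A type-$f$ agent is not satiated ($3.25<5$), so (\ref{eq:3}) gives $\pi\cdot\tfrac{13}{4}=1$, i.e. $p^*_\alpha=\pi=\tfrac{4}{13}$; applying (\ref{eq:3}) to agent $d$ then gives $\tfrac12 p^*_\beta+\tfrac{8}{13}=1$, i.e. $p^*_\beta=\tfrac{10}{13}$. But then the bundle the type-$a$ agents are forced to hold costs $p^*_\alpha+p^*_\beta=\tfrac{14}{13}>1$ and so lies outside their budget set --- a contradiction. Hence $U^*\notin\Phi^{CCE}(R,q)$, while $\Phi^{CCE}(R,q)\neq\emptyset$ by the previous part (a concrete CCE profile is displayed in Table~\ref{tab:r2.2}), so $\phi^\es$ and $\Phi^{CCE}$ are disjoint on this instance.
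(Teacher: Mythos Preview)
Your sketch follows the same three-part structure as the paper, and parts 1 and 3 are sound. For single-valuedness of ES your direct majorization argument is correct, and the paper in fact takes the shortcut you mention parenthetically, simply invoking Theorem~\ref{theorem:lorenz}. For the empty intersection you use exactly the paper's Example~\ref{tab:example2}; your derivation of the forced allocation and of the price contradiction $p^*_\alpha+p^*_\beta=\tfrac{14}{13}>1$ coincides with the argument the paper gives in the main text around Table~\ref{tab:example2}, and your symmetrization over the four type-$f$ agents is a clean way to justify applying~(\ref{eq:1}) simultaneously to $\alpha,\gamma,\delta,\epsilon,\zeta$.

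The gap is in the existence part. With your normalization $\Delta=\{p\geq 0:\sum_k p_k=1\}$ and a fixed budget of $1$ per agent, Walras' law need not hold even weakly: total expenditure is at most $n$, while $p\cdot q$ on that simplex can be as small as $\min_k q_k$, so $p\cdot e(p)$ can be strictly positive (take two agents, two objects, $q=(1,1)$, both accepting both; at $p=(\tfrac12,\tfrac12)$ each demands $(1,1)$ and $p\cdot e(p)=1$). Without $p\cdot e(p)\leq 0$ the Gale--Nikaido--Debreu step does not deliver market clearing, and your $\varepsilon$-shrinking device does not help here: the obstacle is not discontinuity at the boundary (demand is bounded by $z_{ik}\leq 1$, so Berge applies uniformly on all of $\Delta$) but the wrong price normalization. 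The paper instead works on $\{p\geq 0:p\cdot q=n\}$ and, to get Walras' law with \emph{equality}, redistributes income via the uniform rationing rule $\vec\lambda(p)=\mathrm{UNIF}\{p\cdot y_i;n\}$, so that satiated agents' surplus is passed to non-satiated ones and total spending is exactly $p\cdot q$. If you prefer to keep unit budgets, the minimal fix is simply to replace $\Delta$ by $\{p\geq 0:p\cdot q=n\}$; then $p\cdot e(p)\leq 0$ does hold, GND gives $x^*\leq 0$ with $p^*\cdot x^*=0$, and the standing assumption $|R_{Nk}|\geq q_k$ rules out $x^*_k<0$ (a zero-priced object is fully demanded), yielding exact clearing.
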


\subsection{Envy}
It is easy to see that both the ES and CCE solutions are envy-free. A solution $\phi$ is {\bf envy-free} if, for any MAP $(R,q)$ with agents $i$ and $j$ such that $R_{iM} \subseteq R_{jM}$, $\phi_i(R,q) \leq \phi_j(R,q)$. For the multi-valued CCE, envy-freeness holds for any selection from it.

\begin{lemma}
\label{lemma:envyfree}
ES and CCE are envy-free.
\end{lemma}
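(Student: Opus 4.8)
The statement to establish is Lemma~\ref{lemma:envyfree}: that ES and CCE are envy-free, and hence treat equals equally. Since envy-freeness trivially implies equal treatment of equals (take $R_{iM}=R_{jM}$, so $R_{iM}\subseteq R_{jM}$ and $R_{jM}\subseteq R_{iM}$, giving $\phi_i=\phi_j$), the whole task reduces to proving envy-freeness for each of the two solutions. So fix an MAP $(R,q)$ and two agents $i,j$ with $R_{iM}\subseteq R_{jM}$; I must show $\phi_i(R,q)\le\phi_j(R,q)$ for ES and for every selection from CCE.

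\textbf{The CCE case.} This is the more routine direction and I would do it first. Both agents have the same budget (normalized to $1$) and face the same price vector $p^*$. Since $R_{iM}\subseteq R_{jM}$, agent $i$'s consumption set (the individually rational bundles she may purchase) is contained in agent $j$'s: any feasible bundle $Z_{iM}$ with $z_{ik}>0\Rightarrow r_{ik}=1$ also satisfies $z_{ik}>0\Rightarrow r_{jk}=1$, and the cardinality constraint $\sum_k z_{ik}\le |R_{iM}|\le|R_{jM}|$ is only relaxed. Hence $\beta_i(p^*)\subseteq\beta_j(p^*)$. Because agent $j$ maximizes the same utility function $u(\cdot)=\sum_k z_{\cdot k}$ over the larger budget set, $u(Z^*_{jM})=\max_{Z_{jM}\in\beta_j(p^*)}u(Z_{jM})\ge\max_{Z_{iM}\in\beta_i(p^*)}u(Z_{iM})=u(Z^*_{iM})$, which is exactly $\phi_j\ge\phi_i$. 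This argument applies verbatim to any $(Z^*,p^*)\in\CC(R,q)$, so it covers the multi-valued solution.

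\textbf{The ES case.} This is where the real work lies. The natural approach is a swapping argument: suppose for contradiction that the egalitarian utility profile $U=\phi^{\es}(R,q)$ has $U_i>U_j$, with witnessing RAM $Z$. I want to modify $Z$ into a feasible RAM $Z'$ that transfers probability mass from $i$ to $j$, strictly raising $\min\{U_i,U_j\}$-type quantities without lowering anyone else — thereby producing a leximin improvement and contradicting optimality of $Z$. Concretely, since $U_i>U_j$ and $u$ counts acceptable objects, there must be some object $k$ with $z_{ik}>z_{jk}$; because $r_{ik}=1$ and $R_{iM}\subseteq R_{jM}$ we get $r_{jk}=1$, so agent $j$ may legally receive more of $k$. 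Shifting a small amount $\varepsilon$ of the probability of $k$ from $i$ to $j$ preserves feasibility (column sum for $k$ unchanged; row sums still in $[0,1]$ for small $\varepsilon$; individual rationality preserved since $r_{jk}=1$) and decreases $U_i$ by $\varepsilon$ while increasing $U_j$ by $\varepsilon$. The subtlety — and the main obstacle — is bookkeeping the leximin order: one must choose $\varepsilon$ small enough that the transfer does not overshoot (i.e. keeps $U_i-\varepsilon\ge U_j+\varepsilon$ or, if it would cross, stops exactly at equality) and must verify that this genuinely produces $\gamma(U')\succ^l\gamma(U)$ rather than leaving the sorted vector unchanged. Handling the possibility that several coordinates are tied at the relevant levels, and arguing that the transfer strictly improves the sorted profile, is the fiddly part; once set up correctly it contradicts $U=\arg\max_{\succ^{LM}}\UU(R,q)$, forcing $U_i\le U_j$.

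Alternatively, one could invoke the cooperative-game machinery used for Theorem~\ref{theorem:lorenz}: the Lorenz-dominant (hence leximin) point of the core of a supermodular game respects the monotonicity structure of the characteristic function, and $R_{iM}\subseteq R_{jM}$ translates into $j$ being ``more valuable'' in every coalition, which yields $\phi_j\ge\phi_i$. I would present the direct swapping argument as the primary proof since it is self-contained, and mention the game-theoretic route only if the swapping bookkeeping becomes unwieldy.
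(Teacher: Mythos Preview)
Your proposal is correct and follows essentially the same approach as the paper: for CCE you use the budget-set inclusion $\beta_i(p^*)\subseteq\beta_j(p^*)$ (the paper's ``she could afford the schedule of the agent she envies''), and for ES you construct a Pigou--Dalton transfer from the higher-utility agent to the lower-utility one, feasible because $R_{iM}\subseteq R_{jM}$, and derive a contradiction with optimality of the egalitarian profile. The only difference is that the paper dispatches your ``fiddly'' leximin bookkeeping in one line by invoking Theorem~\ref{theorem:lorenz}: any Pigou--Dalton transfer yields a profile that strictly Lorenz dominates $\phi^{\es}(R,q)$, contradicting its Lorenz dominance directly---so the tie-handling and $\varepsilon$-management you flag as the main obstacle are unnecessary.
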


I postpone an easy proof. Note that there is no efficient solution that is strongly envy-free, i.e. that for any MAP $(R,q)$ with agents $i$ and $j$ such that $\left\vert R_{iM} \right\vert < \left\vert R_{jM} \right\vert$, $\phi_i(R,q) \leq \phi_j(R,q)$, see \cite{ortega2016}.

\section{Manipulation by Groups}
\label{sec:manipulation}
I consider agents' manipulation in the direct revelation mechanism associated with each solution. For this purpose, we need to know exactly how objects are assigned and not just the total utility that each agent receives. A {\bf detailed solution} $\psi$ maps every MAP ($R,q$) into an RAM $Z \in \FF(R,q)$, specifying which share of each object is allocated to each agent, whereas a welfarist solution $\phi$ maps every MAP into a utility profile $U \in \UU(R,q)$ and only tells us the expected number of objects received by each agent. Every detailed solution $\psi$ projects onto the welfarist solution $\phi(R,q)=u(\psi(R,q))$. The direct revelation mechanism associated with a detailed solution $\psi$ is such that all agents reveal their preferences $R_{iM}$, and then $\psi$ is applied to the corresponding MAP $(R,q)$, implementing the RAM $\psi(R,q)=Z$.

I assume that agent $i$ with the true preferences $R_{iM}$ can only misrepresent her preferences by understating the number of objects that she finds acceptable, i.e. by declaring a preference profile $R'_{iM} \subset R_{iM}$ (we then say that $R'_{iM}$ is IR for $R_{iM}$). I use this assumption for two reasons. The first is theoretical: I have not specified the dis-utility that the consumption of an undesirable object brings to an agent, as I have only focused on individually rational assignments. I would need to specify such dis-utility to analyse the manipulation of a solution by exaggerating the set of acceptable objects. The second reason is that such an assumption has already been imposed in the study of scheduling problems (e.g. \citealp{koutsoupias2014}). In many scheduling problems motivating MAPs, cancelling consumption could be strongly punished by the central clearinghouse, particularly when other agents' consumption depends on other agents fully exhausting their bundles (no double tennis match can be made with only 3 out of 4 players).\footnote{If agents are allowed to report undesirable objects as desirable, the egalitarian and even priority solutions are manipulable by groups (an agent with a higher priority requests an object she does not want and gives it to a member of the manipulating coalition).}$\textcolor{blue}{^,}$\footnote{BM04 impose an equivalent assumption: they require that the RAM of every MAP must be individually rational according to the agents' true preferences.}

A detailed solution $\psi$ is {\bf group strategy-proof} if for every MAP $(R,q)$ and every coalition $S \subset N$, $\nexists$ $R'$ satisfying i) $R'_{jM}=R_{jM}$ $\forall j \notin S $, and ii) $R'_{SM}$ is IR for $R_{SM}$, such that
\begin{eqnarray}
\forall i \in S, \quad u_i(\psi(R',q)) \geq u_i (\psi(R,q)) 
\end{eqnarray}

with strict inequality for at least one agent in $S$. A welfarist solution $\phi$ is {\bf group strategy-proof} if every detailed solution $\psi$ projecting onto $\phi$ is group strategy-proof.

BM04 show that no deterministic solution is group strategy-proof when agents can obtain at most one object. Deterministic solutions include priority ones, i.e. those in which agents choose sequentially their most preferred available bundle according to some pre-specified order. The reason of why deterministic solutions are manipulable by groups is that the agent with the highest priority could change her report and still receive one acceptable alternative, leaving her utility unchanged and, at the same time, benefiting an agent with low priority: a property known as bossiness. 

This argument does not extend to MAPs. Because agents can obtain multiple objects, the agent with higher priority can belong to a manipulating coalition only by claiming fewer objects. But since she has the highest priority, it is immediate that such manipulation would always give her strictly less utility, so she cannot be in the coalition. The same argument applies to all remaining agents and, consequently,
\begin{lemma}
Any deterministic priority solution is group strategy-proof.
\end{lemma}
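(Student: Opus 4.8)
The plan is to formalize the intuition sketched in the paragraph preceding the lemma, which already contains the essential argument. A deterministic priority solution fixes an order $\sigma$ on $N$, and each agent in turn picks her most preferred available bundle; under dichotomous preferences ``most preferred available bundle'' means: take one unit of every still-available object that the agent finds acceptable (breaking ties among equally good full bundles by some fixed convention, though in fact the number of objects obtained is determined regardless of the convention). Let $\psi$ be such a solution and let it project onto the welfarist solution $\phi$. Suppose for contradiction that $\psi$ is manipulable by a coalition $S$ via a report $R'$ that is IR for $R_{SM}$ and leaves non-members' reports unchanged, with $u_i(\psi(R',q)) \ge u_i(\psi(R,q))$ for all $i \in S$ and strict inequality for some member.

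First I would let $i^{*}$ be the member of $S$ with the highest $\sigma$-priority, i.e. the first agent in $S$ to move in the serial dictatorship. The key step is to observe that at the moment $i^{*}$ moves, the set of available objects is \emph{identical} in the run on $(R,q)$ and in the run on $(R',q)$, because every agent preceding $i^{*}$ in $\sigma$ lies outside $S$ (by choice of $i^{*}$) and hence reports truthfully in both runs, and — crucially — faces the same available-object set in both runs by induction on position. So the available set $A$ seen by $i^{*}$ is the same in both. Under $R_{i^{*}M}$ (truth) she obtains $|R_{i^{*}M} \cap A|$ objects; under $R'_{i^{*}M} \subsetneq R_{i^{*}M}$ she obtains $|R'_{i^{*}M} \cap A| \le |R_{i^{*}M} \cap A|$. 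Thus $u_{i^{*}}(\psi(R',q)) \le u_{i^{*}}(\psi(R,q))$, so $i^{*}$ is not made strictly better off. If $R'_{i^{*}M} = R_{i^{*}M}$ then $i^{*}$'s allocation is unchanged and we may delete $i^{*}$ from $S$; if $R'_{i^{*}M} \subsetneq R_{i^{*}M}$ then we need to rule out the case $u_{i^{*}}(\psi(R',q)) = u_{i^{*}}(\psi(R,q))$ with $i^{*}$'s deviation nonetheless ``helping'' the rest of $S$. Here note that equality $|R'_{i^{*}M} \cap A| = |R_{i^{*}M} \cap A|$ means the dropped objects were in any case unavailable, so $i^{*}$ picks exactly the same bundle in both runs — again her deviation is vacuous and we delete her from $S$. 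Iterating this deletion argument down the priority order, we eventually either empty $S$ (contradicting that some member strictly gains) or reach a stage where the highest-priority remaining deviator does change her effective choice, contradicting the no-worse constraint for that agent.

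To make the induction clean I would phrase the core claim as a lemma: if $R'$ and $R$ differ only on agents who move strictly after position $t$ in $\sigma$, and every such differing agent's $R'$-report, restricted to the objects available when she moves in the $R'$-run, equals her $R$-report restricted to the objects available when she moves in the $R$-run, then the two runs produce the identical RAM. This is proved by induction on the move order: positions $1,\dots,t$ are unaffected trivially; at each later position $s$ the available set is the same in both runs by the induction hypothesis, and by assumption the agent's effective choice set is the same, so she picks the same bundle (given the fixed tie-break) and the available set after $s$ is again common. The deletion procedure above is precisely the repeated invocation of this lemma: each time we find that the top remaining deviator's deviation is effectively vacuous, we fold her into the ``non-differing'' part and shorten the coalition.

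The main obstacle — and the only part requiring genuine care rather than bookkeeping — is the equality case for the top-priority deviator $i^{*}$: one must argue that $u_{i^{*}}$ being unchanged forces not merely the same \emph{number} of objects but the same \emph{bundle}, so that downstream agents' available sets are genuinely unaffected and the induction can proceed. This works because dropping an object that was available would strictly reduce $i^{*}$'s count, so any dropped object must have been already taken by a higher-priority (non-coalition, hence truthful) agent; and among the objects $i^{*}$ does still find acceptable, the greedy rule makes her take \emph{all} available ones, so her realized bundle is forced. A minor secondary point is to confirm that the statement is about the detailed solution $\psi$ (so that ``RAM'' comparisons make sense) and that the welfarist projection $\phi$ inherits group strategy-proofness by the definition given in the text — which is immediate once the detailed version is established for the arbitrary tie-breaking convention built into $\psi$.
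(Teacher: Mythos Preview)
Your argument is correct and follows essentially the same inductive line as the paper's (which proves the lemma informally in the paragraph preceding it): take the highest-priority coalition member, note she faces the same available set in both runs, and observe that dropping acceptable objects cannot raise her count; then iterate down the order. You are in fact more careful than the paper on the equality case---where the paper writes ``strictly less utility'' you correctly handle the possibility that the dropped objects were already unavailable, showing the realized bundle is then unchanged so the deviation is vacuous and the induction proceeds.
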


The previous Lemma shows that group strategy-proofness is relatively easy to achieve for MAPs in the dichotomous domain. In fact, I show below that the ES solution is also group strategy-proof. Is CCE also group strategy-proof? There are two extensions of our group strategy-proofness definition to set valued solutions. 

The first requires that for every MAP ($R,q$), there is no competitive equilibrium of the manipulated MAP ($R',q$) that is weakly better than every competitive equilibria of the original problem $(R,q)$, for every member of the manipulating coalition $S$. A stronger extension is that there is at least one competitive equilibrium of $(R,q)$ which yields a weakly higher utility than some competitive equilibrium of $(R',q)$, with strict inequality for at least one member of the deviating coalition $S$. It turns out that CCE violates both conditions. The reason for this is that a group can coordinate to make several objects perfect, thus allowing those objects to have a zero price.

\begin{theorem}
	\label{theorem:manipulation}
ES is group strategy-proof but CCE is not.
\end{theorem}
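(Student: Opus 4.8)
The plan is to prove the two halves of Theorem \ref{theorem:manipulation} separately, with the positive part (ES is group strategy-proof) relying on the Lorenz dominance of the ES established in Theorem \ref{theorem:lorenz}, and the negative part (CCE is manipulable) exhibited by an explicit example of the ``make an object perfect'' type described in the introduction.

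\textbf{ES is group strategy-proof.} Fix an MAP $(R,q)$, a coalition $S$, and an IR misreport $R'_{SM}$ with $R'_{jM}=R_{jM}$ for $j \notin S$. Let $\psi$ be any detailed solution projecting onto $\phi^\es$, and write $U=u(\psi(R,q))$, $U'=u(\psi(R',q))$. Suppose for contradiction that $U'_i \geq U_i$ for all $i\in S$ with strict inequality somewhere. The key observation is that $\FF(R',q)\subseteq\FF(R,q)$: any RAM feasible and IR under the smaller preference matrix $R'$ is still feasible and IR under $R$ (understating acceptability only shrinks the set of admissible assignments). Hence $\UU(R',q)\subseteq\UU(R,q)$, so $U'\in\UU(R,q)$. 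Now I would compare $U'$ with the genuinely egalitarian profile $U=\phi^\es(R,q)$, which by Theorem \ref{theorem:lorenz} Lorenz-dominates every element of $\UU(R,q)$, in particular $U'$. The remaining step is to derive a contradiction from ``$U'$ Pareto-dominates $U$ on $S$ and ties elsewhere'' together with ``$U$ Lorenz-dominates $U'$''. Since the total utility $\sum_i U_i = \nu(R,q)$ is constant across all efficient profiles and $U'$ weakly beats $U$ coordinate-wise on $S$ while being equal off $S$, we in fact get $\sum_i U'_i \ge \sum_i U_i$ with the sums equal only if the strict inequality never occurred — but $U'$ is efficient for $(R,q)$ only if it has the full matching size; if it does not, $U'$ is Pareto-dominated in $\UU(R,q)$, contradicting $U'$ being beaten by the egalitarian (which is Pareto efficient) in a way that is consistent but still lets me argue: comparing $\gamma(U')$ to $\gamma(U)$, Lorenz dominance forces $\gamma_1(U)\ge\gamma_1(U')$, and iterating with the constant-sum constraint forces $\gamma(U)=\gamma(U')$; so $U$ and $U'$ are permutations of each other, which combined with $U'_i\ge U_i$ on $S$, $U'_j=U_j$ off $S$, and $\sum U'=\sum U$ forces $U'=U$, contradicting the strict inequality. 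I would also need the standard observation that Lorenz dominance of $\phi^\es$ holds not merely weakly but that $\phi^\es$ is the unique leximin maximiser, so the argument is clean.

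\textbf{CCE is not group strategy-proof.} Here I would construct an explicit MAP witnessing the failure of the stronger of the two set-valued definitions (hence also the weaker one). Following the introduction's recipe, I want a coalition $S$ and a report $R'_{SM}\subset R_{SM}$ such that in $(R',q)$ some formerly over-demanded object becomes perfect (its demand drops to exactly $q_k$), acquires a zero or reduced competitive price, so that the members of $S$ who still buy over-demanded objects have more effective budget and strictly increase their consumption, while no member of $S$ loses. The cleanest approach is to reuse a variant of the matrix in Example \ref{tab:example1} or Example \ref{tab:example2}: pick an object, say $\gamma$, that several coalition members find acceptable, have one of them drop $\gamma$ from their report so that $|R'_{N\gamma}|=q_\gamma$, then compute the unique original CCE utilities and the set of manipulated CCE utilities and verify the coordinate-wise weak improvement with a strict gain. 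I would present the two $R$ matrices, the two price vectors, and the two utility profiles in a small table, and check the four optimality conditions (\ref{eq:0})--(\ref{eq:3}) by hand to confirm both are genuine CCE.

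\textbf{Main obstacle.} The positive half is conceptually routine once one notices $\FF(R',q)\subseteq\FF(R,q)$ and invokes Theorem \ref{theorem:lorenz}; the only delicate point is being careful that the detailed solution $\psi$ may output different RAMs for $(R,q)$ and $(R',q)$, so the comparison must be made purely at the level of utility profiles inside the common feasible set $\UU(R,q)$, using that \emph{every} profile in $\UU(R,q)$ is Lorenz-dominated by $\phi^\es(R,q)$. The real work is in the negative half: finding a single small example in which one can (i) pin down the \emph{unique} CCE utility profile of the honest problem — which requires arguing from conditions (\ref{eq:0})--(\ref{eq:3}) that prices and hence utilities are forced — and (ii) exhibit a manipulated problem all of whose CCE equilibria weakly Pareto-improve the coalition. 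Getting the numbers to make the zero-price-on-a-perfect-object mechanism bite, while keeping the non-coalition agents' feasibility intact and the example as small as possible, is the fiddly part; I expect to need $n$ and $m$ comparable to Example \ref{tab:example2}.
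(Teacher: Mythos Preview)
Your plan for the negative half is correct and matches the paper: the ``make an object perfect'' construction is exactly what is used, via Example \ref{tab:example3} (with $n=7$, $m=4$), not a variant of Examples 1 or 2. A coalition $S=\{a,b,c\}$ each drops one object so that $\beta,\gamma,\delta$ become perfect, and every CCE of the misreported problem weakly improves on the unique CCE of the truthful one.

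Your positive half, however, has a genuine gap. You assert that $U'$ ``ties elsewhere'', i.e.\ $U'_j=U_j$ for $j\notin S$, and your contradiction hinges on this. But it is false: when $S$ misreports, $\phi^\es$ is recomputed for \emph{all} agents, and non-members' utilities typically change (trivially: two agents, one object of capacity~1; if agent~1 drops it, agent~2's ES utility jumps from $1/2$ to $1$). Without that equality, Lorenz dominance of $U$ over $U'$ does \emph{not} yield a contradiction with $U'_i\ge U_i$ on $S$: for instance $U=(1,2,3)$ Lorenz-dominates $U'=(0.5,2,3.5)$, yet $U'_3>U_3$. Your observation that $\FF(R',q)\subseteq\FF(R,q)$ and hence $U'\in\UU(R,q)$ is correct and useful, but it is not enough to finish.

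The paper's argument is structurally different. It first proves an auxiliary lemma that ES is \emph{non-bossy} (no agent can change another's utility without changing her own), which disposes of coalition members who are merely indifferent. It then orders agents by their ES utility $U^\es_1\le\cdots\le U^\es_n$ and shows by induction that agent $h$ cannot belong to $S$: any strict gain for $h$ must come from over-demanded objects she shares with agents of equal or lower utility, and since by the inductive hypothesis none of $1,\ldots,h-1$ are in $S$, the only way $S$ can free up those shares is by some member of $S$ relinquishing demand, which makes that member strictly worse off. This induction exploits the \emph{level structure} of the leximin optimum, not just its Lorenz dominance.
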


The proof of ES being group strategy-proof can be found in the Appendix, but I show that CCE is unambiguously manipulable by groups below.

\begin{example}[CCE not group strategy-proof]Let $n=7$, $m=4$, and $(R,q)$ given by Table \ref{tab:example3}. 

\begin{table}[ht]
\centering
\caption{Example 3.}
\label{tab:example3}
\begin{subtable}[t]{0.45\linewidth}
\centering
\begin{tabular}{|l|l|l|l|l|l|l|}
\hline
$N \backslash M$  & $\alpha$ & $\beta$ &  $\gamma$  & $\delta$     & $\Phi^{CCE}$\\
\hline
$\pmb a$&1 & \bf 1 & 1 & 1 & \bf 2.5\\
$\pmb b$&1 & 1 & \bf 1 & 1 & \bf 2.5\\
$\pmb c$&1 & 1 & 1 & \bf 1 & \bf 2.5\\
$d$&1 & 0 & 1 & 1 & 2.5\\
$e$&1 & 1 & 0 & 1 & 2.5\\
$f$&1 & 1 & 1 & 0 & 2.5\\
$g$&1 & 0 & 0 & 0 & 1\\
\hline
Total &7 & 5 & 5 & 5	& \\
\hline
\hline
$q$ &4 & 4 & 4 & 4	& \\
\hline
\end{tabular}
\caption{True preferences $R$.}
\label{tab:71}
\end{subtable}
\qquad 
\begin{subtable}[t]{0.45\linewidth}
\centering
\begin{tabular}{|l|l|l|l|l|l|}
\hline
$\alpha$ & $\beta$ &  $\gamma$  & $\delta$     & $\Phi^{CCE}$\\
\hline
1 & \bf 0 & 1 & 1 & \bf [2.5 - 2.57]\\
1 & 1 & \bf 0 & 1 & \bf [2.5 - 2.57]\\
1 & 1 & 0 & \bf 0 & \bf [2.5 - 2.57]\\
1 & 0 & 1 & 1 & [2.5 - 2.57]\\
1 & 1 & 0 & 1 &  [2.5 - 2.57]\\
1 & 1 & 1 & 0 &  [2.5 - 2.57]\\
1 & 0 & 0 &  0 & [0.57 - 1]\\
\hline
Total &7 & 4 & 4 &  \\
\hline
\hline
$q$ &4 & 4 & 4 &  \\
\hline

\end{tabular}
\caption{Misreport $R'$ for $S=\{a,b,c\}$.}
\label{tab:r72}
\end{subtable}
\end{table}
\end{example}

Consider the coalition $S=\{a,b,c\}$. When agents submit their real preferences, there exists a unique CCE that supports the ES solution: agents $a,b,$ and $c$ obtain 2.5 expected objects. By changing their report each for a different object, as in subtable \ref{tab:r72}, they make objects $\beta$, $\gamma$ and $\delta$ perfect, consequently enlarging the set of CCE solutions, which includes utilities that are always weakly above 2.5 and up to 2.57. By misrepresenting and creating artificially perfect objects, they allow those to be priced at 0, weakly increasing the number of expected objects received in any competitive equilibria of ($R',q$), at the expense of agents with limited acceptable objects, in this case $g$.

I do not discuss strategy-proofness (manipulation by individuals on their own) since it is immediate that ES and CCE (and EPO) are strategy-proof. For CCE, we can construct a selection of it that is strategy-proof, since reducing the total demand for an object either reduces its price, relatively increasing the price of other objects, or leaves its price unchanged.

Efficiency, fairness, and non-manipulability are standard goals in the design of resource allocation mechanisms. Before concluding, I discuss a new goal that arises naturally for MAPs.

\section{Independence of Perfect Objects}
\label{sec:perfect}

Some solutions do not depend on the number of perfect objects desired by an agent. If an agent finds a new perfect object to be acceptable, we could expect that she would always receive one extra expected unit. This is what our following property captures.

A solution $\phi$ is {\bf independent of perfect objects (IPO)} if, for every MAP, every $i \in N$ and for any of its perfect extensions $([R \; R_{Nk'}], \overline q)$,
\begin{equation}
\phi_i(R,q)+1=\phi_i([R \; R_{Nk'}],  \overline q)
\label{eq:wealth}
\end{equation}

IPO is a desirable property for two reasons. First, perfect objects belong unambiguously to agents who find them acceptable, so they can argue that they should obtain them fully, irrespective of the share they obtain from over-demanded objects. Second, if the clearinghouse uses a solution that was not IPO, the set of agents who find perfect objects acceptable could avoid reporting their demand for perfect objects and obtain them fully outside the centralized mechanism, a real concern for scheduling applications in which agents may organize teamwork activities on their own.

CCE (partially) satisfies this requirement.\footnote{EPO also satisfies IPO. Once more, EPO performs very poorly with respect to fairness considerations so I do not analyse it further.}

\begin{lemma}
\label{lemma:ipo}
Although ES is not IPO, there exists a selection of CCE that satisfies IPO. 
\end{lemma}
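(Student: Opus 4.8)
The plan is to establish the two assertions separately. For the negative part (ES is not IPO), I would exhibit a small counterexample: start from a MAP with a single over-demanded object desired by two agents, one of whom also desires a perfect object, and check by hand that the leximin-optimal profile does not shift the second agent's utility up by exactly one when the perfect object is added. The point is that ES equalizes utilities globally, so endowing one agent with a guaranteed extra unit lets the leximin solution redistribute the over-demanded object in her disfavor; thus she gains strictly less than one full expected unit. A $2$- or $3$-agent instance with capacities chosen so that the over-demanded object is split unevenly before and after the extension should suffice; I would present it as a short table in the same style as Examples 1--3.

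For the positive part, I would construct the CCE selection explicitly via a two-stage (decomposition) procedure. Stage one: allocate every perfect object $k \in \PP(R,q)$ fully and deterministically to the agents in $R_{Nk}$ (feasible since $|R_{Nk}| = q_k$), and set its price to $p^*_k = 0$. Stage two: solve the residual competitive problem on the over-demanded objects only, i.e. find a CCE $(Z^*_{\OO}, p^*_{\OO})$ of the MAP $(R_{N\OO}, q_{\OO})$ restricted to over-demanded objects, which exists by Theorem \ref{theorem:existence} applied to that sub-MAP. Then glue: $Z^*$ assigns $1$ to each $(i,k)$ with $k$ perfect and $r_{ik}=1$, and $Z^*_{iM}$ equals $Z^*_{\OO}$ on the over-demanded coordinates; the price vector is $p^*_{\OO}$ on $\OO$ and $0$ on $\PP$. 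I would check the three things this needs: (i) it is genuinely a CCE of $(R,q)$ — feasibility and individual rationality are immediate, and the optimality conditions \eqref{eq:0}--\eqref{eq:3} hold because zero-priced perfect objects are free so every agent fully consumes all perfect objects she finds acceptable, while her spending and marginal-benefit conditions on $\OO$ are inherited from the sub-CCE; (ii) the map is a well-defined single selection — here I need the utility profile of a sub-CCE on the over-demanded problem to be pinned down, which is where I must be a little careful, since Example \ref{tab:example1} shows CCE need not be unique even on over-demanded objects; so I would fix the selection by an explicit tie-breaking rule, e.g. take the welfare-maximizing CCE and within that the leximin one, which is unique; (iii) IPO holds — adding one more perfect object $k'$ acceptable to agent $i$ adds the new column with price $0$ and capacity $|R_{Nk'}|$, leaves the over-demanded sub-problem $(R_{N\OO},q_{\OO})$ literally unchanged, hence leaves $p^*_{\OO}$, $Z^*_{\OO}$ and everyone's over-demanded consumption unchanged, and gives agent $i$ exactly one additional unit of $k'$ while changing no one else's utility; so $\phi_i$ increases by exactly $1$ and \eqref{eq:wealth} holds.

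The main obstacle I anticipate is item (ii): making the selection genuinely single-valued and stable under perfect extensions. I must ensure the chosen rule on the over-demanded sub-problem depends only on $(R_{N\OO}, q_{\OO})$ — which it does, since perfect extensions do not touch that data — and that it always returns a CCE (so that the glued object is a CCE of the full problem, not merely an efficient allocation). Using ``the leximin-optimal point among welfare-maximal CCE utility profiles'' handles uniqueness, but I would double-check that this point is attained by an actual CCE and not just a limit of CCE utility profiles; compactness of $\CC(R_{N\OO},q_{\OO})$ (which follows from the fixed-point existence argument behind Theorem \ref{theorem:existence}) should close that gap. A secondary, purely expository, obstacle is keeping the counterexample for the negative part minimal while still having a nonempty set of perfect objects before the extension — I would simply take the ``new'' object to be the only perfect one, so the pre-extension MAP has $\PP = \emptyset$, which is allowed by the definition of a perfect extension.
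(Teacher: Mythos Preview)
Your proposal is correct and rests on the same idea as the paper --- price perfect objects at zero --- but you are more careful about actually producing a single-valued selection. For the negative part, the paper uses a 5-agent instance (one over-demanded object of capacity $4$ shared by all five, then append a perfect object of capacity $4$; the extended agent's ES utility moves from $0.8$ to $1.75$, not $1.8$); your 2--3 agent sketch works for the same reason and is smaller. For the positive part, the paper's proof is terser than yours: it simply takes an arbitrary CCE $(Z^*,p^*)$ of $(R,q)$, appends the new perfect column with price $0$ and $z^*_{ik'}=r_{ik'}$, and checks this is a CCE of the extension. It does not address your item (ii) --- how to make the selection single-valued and consistent across all MAPs --- at all. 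Your decomposition (solve a fixed CCE on the over-demanded sub-MAP, then glue zero-priced perfect objects back) is exactly what is needed to turn the paper's extension observation into an honest function $\phi$; your worry about tie-breaking on the over-demanded side is legitimate and the paper simply glosses over it. What the paper's argument buys is brevity; what yours buys is a selection that genuinely satisfies equation~\eqref{eq:wealth} as written.
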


Lemma \ref{lemma:ipo} highlights that CCE can always assign a zero price to all perfect objects: this is how we construct the selection of CCE that satisfies IPO. But it may also assign a zero price to some perfect objects only, or to no perfect object at all. The designer has a high degree of flexibility in choosing the equilibrium prices.

The selection problem extends to \citeauthor{budish2011}\textcolor{blue}{'s} (\citeyear{budish2011}) competitive mechanism for CAP in which students reveal their preferences to a centralized clearinghouse which announces a corresponding equilibrium allocation. Budish argues that this mechanism is transparent, meaning that students can verify that the allocation is an equilibrium. But the mechanism can be ``manipulated from the inside'', selectively assigning zero prices to hand-picked courses, while at the same time rightly arguing that it produces a competitive allocation.

If IPO must be achieved (a decision depending on the context and the designer's objectives), we would like to have a solution that, at the same time, avoids the multiplicity problem of the CCE, while being envy-free and as fair as possible. Such a solution exists: we call it the {\bf refined egalitarian solution (ES*)}. To define it, we use the partition of $M$ into $\PP(R,q)$ and $\OO(R,q)$, and split the original MAP $(R,q)$ into two independent problems $(R_{N\PP(R,q)},q_{\PP(R,q)})$ and $(R_{N\OO(R,q)},q_{\OO(R,q)})$, which correspond to the independent MAPs with perfect and over-demanded objects, respectively. ES* is given by

\begin{equation}
\phi^{\es^*}_i(R,q)=\phi^\es(R_{N\OO(R,q)},q_{\OO(R,q)}) + \left\vert R_{i\, \PP(R,q)} \right\vert
\end{equation}

ES* takes the egalitarian solution for the MAP with over-demanded objects only, and adds the number of perfect objects in which a player is available. ES* is close to a suggestion in \cite{budish2011}. Noting that some courses may be in excess supply, he proposes to run the allocation mechanism only on the set of over-demanded courses: ``{\it if some courses are known to be in substantial excess supply, we can reformulate the problem as one of allocating only the potential scarce courses}''. ES* formalizes this suggestion. It also satisfies several desiderata.

\begin{lemma}
\label{thm:refined}
The ES* solution is well-defined and single-valued, efficient, IPO, envy-free, and Lorenz dominant for the problem $(R_{N\OO(R,q)},q)$.
\end{lemma}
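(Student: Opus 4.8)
The statement bundles five claims about $\phi^{\es^*}$: well-definedness/single-valuedness, efficiency, IPO, envy-freeness, and Lorenz dominance for the over-demanded subproblem. I would dispatch them in that order, since each is essentially a transfer of a property already established for $\phi^\es$ (or for the trivial allocation of perfect objects) through the decomposition $M = \PP(R,q) \sqcup \OO(R,q)$.

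First, well-definedness and single-valuedness: the partition $(\PP(R,q),\OO(R,q))$ is determined by $(R,q)$ via \eqref{eq:perfect}, the subproblem $(R_{N\OO(R,q)},q_{\OO(R,q)})$ is a genuine MAP (each remaining object still satisfies $|R_{Nk}|\ge q_k$, indeed $|R_{Nk}|>q_k$), and Theorem~\ref{theorem:existence} gives that $\phi^\es$ is single-valued on it; the term $|R_{i\,\PP(R,q)}|$ is a fixed integer. So $\phi^{\es^*}$ is a well-defined function of $(R,q)$. Second, efficiency: I need $\sum_i \phi^{\es^*}_i(R,q) = \nu(R,q)$. Since every perfect object $k$ has $|R_{Nk}| = q_k$, it can (and in any efficient assignment must) be given fully to all agents who accept it, contributing $\sum_i |R_{i\,\PP(R,q)}| = \sum_{k\in\PP}q_k = \nu(R_{N\PP},q_{\PP})$ units; and $\phi^\es$ on the over-demanded subproblem is Pareto optimal there (ES is always efficient), contributing $\nu(R_{N\OO},q_{\OO})$ units. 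Adding, $\sum_i\phi^{\es^*}_i = \nu(R_{N\PP},q_\PP)+\nu(R_{N\OO},q_\OO) = \nu(R,q)$. The only thing to check carefully is that assigning every perfect object fully does not violate feasibility of the combined assignment — but the capacity constraints across distinct objects are independent ($\sum_{k}z_{ik}\le q_k$ is per-object, and the ``at most one unit of each'' bound is respected since $r_{ik}\le 1$), so gluing the two sub-assignments yields a legitimate RAM.

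Third, IPO is essentially immediate from the formula. Adding a perfect extension $([R\;R_{Nk'}],\overline q)$ enlarges $\PP$ by exactly the new object $k'$ (it is perfect by construction, and no previously over-demanded object changes status since $R$ and $q$ are untouched on the old coordinates), leaves $\OO$ unchanged, hence leaves $\phi^\es(R_{N\OO},q_{\OO})$ unchanged, and increments $|R_{i\,\PP}|$ by $1$ exactly for agent $i$ — giving \eqref{eq:wealth}. Fourth, envy-freeness: suppose $R_{iM}\subseteq R_{jM}$. Then $R_{i\OO}\subseteq R_{j\OO}$ and $R_{i\PP}\subseteq R_{j\PP}$, so $|R_{i\PP}|\le|R_{j\PP}|$ trivially, and $\phi^\es_i(R_{N\OO},q_\OO)\le\phi^\es_j(R_{N\OO},q_\OO)$ by Lemma~\ref{lemma:envyfree} applied to the subproblem. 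Sum the two inequalities. Fifth, Lorenz dominance for $(R_{N\OO(R,q)},q)$: this is exactly the content of Theorem~\ref{theorem:lorenz} applied verbatim to the over-demanded MAP — $\phi^\es(R_{N\OO},q_\OO)$ is Lorenz dominant in $\UU(R_{N\OO},q_\OO)$ by definition of $\phi^{\es^*}$ (the perfect-object term is a constant shift that does not affect which profile within the subproblem is leximin-optimal).

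**Main obstacle.** None of the five pieces is deep; the one place to be slightly careful is the bookkeeping around the partition under perfect extension and the claim that ``$k$ over-demanded in $(R,q)$ $\Rightarrow$ $k$ over-demanded in the extension'' — one must confirm that adding a new column and enlarging $q$ only in the new coordinate cannot turn an over-demanded object perfect (it cannot, since $|R_{Nk}|$ and $q_k$ are both unchanged) and symmetrically cannot turn a perfect object over-demanded. Beyond that, the subtlest verification is that the over-demanded subproblem is itself a valid MAP in the sense required by Theorems~\ref{theorem:lorenz} and \ref{theorem:existence} — i.e. the standing hypothesis $|R_{Nk}|\ge q_k$ survives restriction, which it does (with strict inequality, by the very definition of $\OO$). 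Everything else is a one-line consequence of linearity of $u_i$ in the object-coordinates and the independence of the feasibility constraints across the block $\PP$ versus the block $\OO$.
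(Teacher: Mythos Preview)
Your proposal is correct and follows essentially the same route as the paper: the paper's own proof is a single sentence stating that single-valuedness, efficiency, and IPO are immediate from the definition, and that envy-freeness and Lorenz dominance are straightforward modifications of Lemma~\ref{lemma:envyfree} and Theorem~\ref{theorem:lorenz}. You have simply written out those ``immediate'' and ``straightforward'' steps explicitly, including the bookkeeping (stability of the partition under perfect extension, validity of the over-demanded subproblem as an MAP) that the paper leaves implicit.
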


It is immediate that ES* is single-valued, efficient and IPO. The remaining properties are straightforward modifications of the proofs of Lemmas 1 and 2 and Theorem 1. Unfortunately, the properties in Lemma \ref{thm:refined} come at a cost: ES* is not group strategy-proof.\footnote{For an example, use the MAP and manipulation $R'$ illustrated in Example \ref{tab:example3}.} ES* can be manipulated by groups reducing their availability in order to make some objects perfect. Therefore, the members of the manipulating coalition obtain those objects fully, while also obtaining an egalitarian fraction of the remaining over-demanded problem.

Group strategy-proofness and IPO are compatible. EPO satisfy them both (plus envy-freeness and Pareto efficiency). However, its poor performance with respect to fairness makes it inappropriate for the problems I have considered in this paper, as argued in subsection \ref{sec:epo}.

\section{Conclusion}
\label{sec:conclusion}
For multi-unit assignment problems in the dichotomous preference domain, the egalitarian solution is Lorenz dominant, single-valued and group strategy-proof. For these reasons, I recommend its use as a solution instead of the celebrated competitive equilibrium with equal incomes, which fails these three desirable properties. If the market designer is interested in satisfying the property of independence of perfect objects, the refined egalitarian solution becomes an appealing alternative.

\pagebreak
\section*{Appendix: Proofs}
\noindent {\bf Theorem 1} {\it The ES solution is Lorenz dominant in the set of efficient utility profiles.}
\begin{proof}
	Fix a MAP $(R,q)$. Consider the concave cooperative game $(N, \mu)$ where $\mu:2^N \to \RR$ is a function that assigns, to each subset of agents, the maximum number of objects that they can obtain together. To formalize this intuitive function, given a coalition $S \subset N$, let us partition the set of objects $M$ into $M^+(S)$ and $M^-(S)$, defined as
	\begin{equation}
	M^+(S)=\{k \in M : \left\vert R_{Sk} \right\vert < q_k\}
	\end{equation}
	
	The function $\mu$ is given by
	\begin{equation}
	\mu(S)= \sum_{k \in M^+(S)} \sum_{i \in S}  r_{ik} + \sum_{k \in M^-(S)} q_k
	\end{equation}
	
	This function is clearly submodular, i.e. for any two subsets $T,S \subset N$
	\begin{equation}
	\mu(S) + \mu(T) \geq \mu(S \cup T) + \mu(S \cap T)
	\end{equation}
	
	The ``core from above'' is defined as the following set of profiles
	\begin{equation}
	C(R,q)=\{x \in \RR^n \mid \sum_{i \in N} x_i = \nu (R,q) \text{ and } \nexists \, S\subset N : \sum_{i \in S} x_i  > \mu(S)\}
	\end{equation}
	
	It follows from Theorem 3 in \cite{dutta1989} that the set $C(R,q)$ has a Lorenz dominant element and is the egalitarian solution. But by construction of the ``core from above'', $\UU(R,q) \subset C(R,q)$, the ES solution is also Lorenz dominant in the set of efficient utility profiles $\UU(R,q)$. 
\end{proof}

\noindent {\bf Theorem 2} {\it For generalized tennis problems, the ES solution is well-defined and single-valued, and the CCE solution exists. Their intersection can be empty.}
\begin{proof}
	Fix a MAP $(R,q)$. Let $p \in \RR^m_+$ be an arbitrary price vector such that $p \cdot c = n$, and use the notation $y_i= R_{iM}$ to denote the optimal consumption bundle for agent $i \in N$, and $y_N=(\left\vert R_{N1} \right\vert, \ldots,\left\vert R_{Nm} \right\vert)$. Note that
	\begin{equation}
	p \cdot y_N \geq p \cdot q
	\end{equation}
	
	Define the vector $\vec{\lambda}$ as
	\begin{equation}
	\vec{\lambda}(p)=(\lambda_1, \ldots, \lambda_n)=\text{UNIF}\{p \cdot y_i; n\}
	\end{equation}
	
	where UNIF denotes the uniform rationing rule: a mapping that gives to every agent the money needed to buy her preferred bundle of objects as long as it is less than $\lambda$, chosen so that $p \cdot \vec{\lambda}=n$. Define the sets of satiated and non-satiated agents
	\begin{eqnarray}
	N_0(p)&=&\{ i \in N \mid \lambda_i = p \cdot y_i\}\\
	N_+(p)&=&\{ i \in N \mid \lambda_i < p \cdot y_i\} 
	\end{eqnarray}
	
	So that $\lambda_i =\lambda \, \forall i \in N_+$. Define the demand correspondence $d_i(p)$ as
	\begin{equation}
	d_i(p) = \arg\max_{Z_{iM} \in \IR(R_{iM})}\{p \cdot Z_{iM} \leq \lambda_i \}
	\end{equation}
	
	where $\IR(R_{iM})$ denotes the set of individually rational assignments for $R_{iM}$. Note that $d_i(p)=\{y_i$\} for every $i \in N_0(p)$, while for agents in $N_+(p)$, any vector $z_i \in d_i(p)$ satisfies $p \cdot z_i = \lambda$. By Berge's maximum theorem, the demand correspondence is upper hemi-continuous and convex valued. The excess demand correspondence for the whole society, which inherits the properties of $d_i$, is given by
	\begin{equation}
	e(p)=d_N(p) - q
	\end{equation}
	
	where $d_N(p)$ denotes the aggregate demand correspondence for each object. Using the Gale-Nikaido-Debreu theorem (Theorem 7 in pp. 716-718 of \cite{debreu1982}), we know that there exists both a price vector $p^* \in R_+$ and an excess demand vector $x^* \in e(p^*)$ for which the following two conditions are satisfied
	\begin{eqnarray}
	\label{eq:feasi}x^*& = &\vec{0}\\
	\label{eq:walras}p^* \cdot x^*&=&0
	\end{eqnarray}
	
	Where Walras' law in equation (\ref{eq:walras}) holds by construction of $\vec{\lambda}$ and $d$. Finally, $\forall i \in N $
	\begin{equation}
	Z^*_{iM}=d_i(p^*)
	\end{equation}
	
	so that the corresponding $Z^* \in \FF(R,q)$  by equation (\ref{eq:feasi}), concluding the proof of existence of CCE. That ES is single-valued follows from Theorem 1. I have shown in Example \ref{tab:example3} that for some MAP there do not exist prices that support the ES as a CCE.
\end{proof}

\noindent {\bf Lemma 2} {\it ES and CCE are envy-free.}
\begin{proof}
	For an arbitrary MAP, let $\phi^\es(R,q)=(U_1,\ldots, U_i, U_j, \ldots, U_n)$, and assume agent $i$ is envious of $j$, which means that $R_{jM} \subseteq R_{iM}$ and that there exists a Pigou-Dalton transfer $\epsilon$ so that the utility profile $U'=(U_1,\ldots, U_i+\epsilon, U_j-\epsilon, \ldots, U_n) \in \UU(R,q)$. But $U'$ Lorenz dominates $\phi^\es(R,q)$, so $\phi^\es(R,q)$ was not the ES solution, a contradiction.
	
	Any selection of the CCE solution is envy-free because of the standard argument: if there is any agent who is envious, she could afford the schedule of the agent she envies.
\end{proof}

\noindent {\bf Theorem 3} {\it ES is group strategy-proof but CCE is not.}

\noindent I have shown that CCE is not group strategy-proof in the main text. To show that ES is group strategy-proof, I start with a few preliminaries. Let $\ZZ$ denote the set of all feasible RAMs supporting the egalitarian solution, i.e.
\begin{equation}
\ZZ=\{ Z \in \FF(R,q) \mid \forall i \in N : \sum_{k \in M} z_{ik}=\phi^\es_i (R,q) \}
\end{equation} 

A rule is non-bossy if no agent can affect someone else's allocation without changing her own utility. That is, a solution $\phi$ is {\bf non-bossy} if, for every MAP ($R,q$), $\forall i \in N$, and any manipulation $R'$ such that 1) $\forall j \neq i, R_{jM}=R'_{jM}$, and 2) $R'_{iM} \subsetneq R_{iM}$, we have
\begin{equation}
\phi_i(R,q)=\phi_i(R',q) \quad \textrm{ only if } \quad  \phi(R,q)=\phi(R',q)
\end{equation}

We prove a useful auxiliary Lemma below.

\begin{lemma}
	ES is non-bossy.
\end{lemma}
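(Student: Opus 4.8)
The plan is to exploit the characterization of the egalitarian solution as the leximin point of the efficient utility set, together with the fact that when agent $i$ shrinks her acceptable set from $R_{iM}$ to $R'_{iM} \subsetneq R_{iM}$ while everyone else is unchanged, the new efficient utility set $\UU(R',q)$ is contained in the old one $\UU(R,q)$ in a controlled way. Concretely: any $Z' \in \FF(R',q)$ is also in $\FF(R,q)$ (shrinking $i$'s acceptable set only removes individual-rationality-feasible assignments), so $\UU(R',q) \subseteq \UU(R,q)$. Suppose $\phi^\es_i(R,q) = \phi^\es_i(R',q)$; write $U = \phi^\es(R,q)$ and $U' = \phi^\es(R',q)$, so $U_i = U'_i$. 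I would argue that in fact $U' = U$.

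First I would show $U' \succeq^{LM} U$ is impossible unless $U' = U$, and that $U \succeq^{LM} U'$ holds: since $U' \in \UU(R',q) \subseteq \UU(R,q)$ and $U$ is the leximin maximizer over $\UU(R,q)$, we get $U \succeq^{LM} U'$. So it suffices to show $U \in \UU(R',q)$, because then the leximin maximality of $U'$ over $\UU(R',q)$ forces $U' \succeq^{LM} U$, hence $U = U'$ by antisymmetry of $\succ^{LM}$ on sorted vectors; and once the utility profiles coincide the welfarist solution values coincide, which is exactly the non-bossiness conclusion. To show $U \in \UU(R',q)$, take any $Z \in \ZZ$ realizing $U$ for $(R,q)$. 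The only obstruction to $Z \in \FF(R',q)$ is that $z_{ik} > 0$ for some $k \in R_{iM} \setminus R'_{iM}$. I would remove each such mass from row $i$ and reassign it: because $\mu$ (the submodular "maximum number of objects" game from the proof of Theorem 1) governs the efficient total, and $u_i(Z) = U_i = U'_i$ is already the egalitarian target for the restricted problem, the amount $\sum_{k \in R_{iM}\setminus R'_{iM}} z_{ik}$ that must be relocated off agent $i$ can be absorbed by the objects agent $i$ does keep (those in $R'_{iM}$) — one uses that $|R_{Nk}| \ge q_k$ and a Hall-type / augmenting-path argument on the bipartite fractional assignment to re-route the displaced mass within row $i$ without changing $u_i$ and without disturbing any other row, keeping feasibility and efficiency. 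This produces $Z' \in \FF(R',q)$ with $u(Z') = U$, i.e. $U \in \UU(R',q)$.

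The main obstacle is precisely this re-routing step: I need that whenever agent $i$'s utility in the egalitarian allocation is unchanged by the contraction, the mass she was receiving from now-forbidden objects can be redistributed onto her still-acceptable objects while leaving every other agent's row (and hence every other agent's utility and the full profile) intact. Intuitively this holds because if agent $i$ can still reach utility $U_i$ in the smaller problem (which she can, since $U_i = U'_i = \phi^\es_i(R',q) \le$ her capacity $|R'_{iM}|$, and efficiency of the restricted problem is witnessed by $\mu$), then there is slack on her remaining objects exactly equal to the displaced mass; the clean way to see it is to start from an efficient $Z' \in \FF(R',q)$ attaining $U' = U$ on row $i$ and show it can be taken to agree with the original $Z$ on all rows $j \ne i$ — here one invokes that both $Z$ and $Z'$ are efficient RAMs with the same column sums dictated by efficiency on the shared objects, and uses the decomposition/augmenting-cycle structure of the assignment polytope to transfer the differences entirely into row $i$. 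I would organize the argument around that polyhedral fact rather than an explicit construction, flagging it as the crux and citing the hierarchy/feasibility structure already invoked before Lemma 1.
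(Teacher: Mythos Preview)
Your high-level framework is sound and cleaner than the paper's: using $\UU(R',q)\subseteq\UU(R,q)$ to get $U\succeq^{LM}U'$, and then arguing that $U\in\UU(R',q)$ to conclude $U'=U$ by antisymmetry, is exactly the right logical shape. The paper instead argues by a case split on the type of object agent $i$ drops (perfect versus over-demanded, and within over-demanded, whether some $Z\in\ZZ$ already has $z_{ik}=0$), showing directly that either $U_i$ strictly falls or some $Z\in\ZZ$ already lies in $\FF(R',q)$.

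The gap is your re-routing step, and the specific mechanism you propose cannot work as stated. You write that you will ``re-route the displaced mass within row $i$ without changing $u_i$ and without disturbing any other row, keeping feasibility and efficiency.'' But the column constraints make this impossible in isolation: if $i$ sheds mass on a forbidden column $k$ and absorbs the same mass on some $k'\in R'_{iM}$, with every other row frozen, then column $k$ becomes under-allocated (efficiency lost) and column $k'$ becomes over-allocated (feasibility lost). Any re-routing must pass through other rows; what you need is a cycle in the bipartite fractional-assignment graph along which the net utility change for every $j\neq i$ is zero, not that their rows are literally untouched. Your fallback suggestion --- start from a $Z'$ realizing $U'$ and transfer all differences from $Z$ into row $i$ --- is circular: it presupposes that $(U_{-i},U_i)$ is feasible for $(R',q)$, which is precisely the claim $U\in\UU(R',q)$ you are trying to establish.

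To close the gap along your lines you would need to prove: whenever $\phi^\es_i(R,q)=\phi^\es_i(R',q)$, there exists $Z\in\ZZ$ with $z_{ik}=0$ for every $k\in R_{iM}\setminus R'_{iM}$. This is exactly what the paper's case analysis delivers (cases 2 and 3 show that if no such $Z$ exists then $U_i$ must strictly drop), and it is not a consequence of a generic polyhedral or Hall-type fact about the assignment polytope --- it uses the leximin structure of $\ZZ$ in an essential way. So either supply that argument, or adopt the paper's case split at this point; the rest of your reduction then goes through.
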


\begin{proof}
	We proceed by way of contradiction. Let $R'$ be as specified in the previous definition. The manipulation may come from a reduction of availability in three types of objects:
	
	1. $k \in \OO(R,q)$ and $\{k \in M \mid \exists Z \in \ZZ : z_{ik}=0\}$, and hence there is a way to implement the ES solution even when agent $i$ misreported, so her change in availability is inconsequential and all utilities remain the same, so agent $i$ cannot be bossy.
	
	2. $k \in \OO(R,q)$ and $\{k \in M \mid \forall Z \in \ZZ : z_{ik}>0\}$, so clearly agent $i$'s
	utility changes, so she cannot be bossy. 
	
	31. $k \in \PP(R,q)$, but if agent $i$ reduces the number of perfect goods, she always reduces the utility she obtains (as I prove below), so her utility is not constant and she cannot be bossy.
	
	Now I prove that reducing the number of perfect objects in which agent $i$ is available always strictly reduces her utility. The certain loss of the perfect object(s) must be exactly compensated by an increase of the shares she gets from all over-demanded objects, which is constant in any $Z \in \ZZ$. Agent $i$ was not getting full shares of those objects (as otherwise we obtain a contradiction) so another agent(s) $j$ must be obtaining shares for those objects, implying $\phi_j^\es(R,q) \leq \phi_i^\es(R,q)$ (because otherwise the ES would give those shares to agent $i$). Some of the shares obtained by agent $j$ in $\phi(R,q)$ must be transferred to agent $i$ in $\phi(R',q)$: this is a Pigou-Dalton transfer because if agent $i$ did not obtain a lower utility in the misrepresented problem then he would not obtain the shares of $j$. Moreover, 
	\begin{equation}
	\phi_i^\es(R,q) - 1 < \phi_j^\es(R,q) \leq \phi_i^\es(R,q)
	\end{equation}
	
	as otherwise $j$ does not transfer any shares to $i$ when $i$ reduces the number of perfect objects. Let $\gamma$ be the Pigou-Dalton transfer from $j$ to $i$ required so that the utility of $i$ is kept constant. We have
	\begin{equation}
	\phi_i^\es(R',q)=\phi_i^\es(R,q) - 1 + \gamma = \phi_j^\es(R,q)  - \gamma < \phi_i^\es(R,j)
	\end{equation}
	
	showing that indeed reducing the number of perfect objects always yields lower utility, and thus concluding the proof that ES is non-bossy.\end{proof}

We are now ready to prove that ES is group strategy-proof. We will do this by showing that nobody can join a manipulating coalition.

\begin{proof}
	By way of contradiction, assume there exists a MAP $(R,q)$, a coalition $S \subsetneq N$, and a manipulation $R'$ such that, for all $i \in S$ $\phi^\es_i(R',q) \geq \phi^\es_i(R,q)$, and for some $j \in S$ $\phi^\es_j(R',q) > \phi^\es_j(R,q)$. 
	
	Let $\phi^\es(R,q)=U^\es$ and order the agents such that $U^\es_1 \leq \ldots \leq U^\es_n$. We will show by induction on the order of agents the following property
	\begin{equation}
	\label{eq:notins}
	i \notin S
	\end{equation}
	
	There are two cases in which an agent $i$ can be in $S$. Case 1) either she gets more utility, $\phi_i^\es(R',q) > \phi_i^\es(R,q)$, or case 2) she gets the same utility but she changes her reported preferences to help another member of $S$. This is ruled out by the non-bossiness of ES so we focus on case 1) only. 
	
	We prove it for $i=1$ first, i.e. the agent with lowest utility. Agent 1 gets a strictly higher number of objects with the new profile $R'$, which must come from a set of objects $K \subseteq \OO(R,q)$ from which he was not getting full shares ($K=\{k \in M \mid \exists Z \in \ZZ : 0 < z_{ik}<1\}$), for which agents $2, \ldots, t$ are also available and $U^\es_1=U^\es_2=\ldots=U^\es_t$. Those agents exhaust $q_k$ entirely; i.e. $\forall k \in K, \forall Z \in \ZZ, \sum_1^t z_{ik}=q_k$. 
	
	Let $T=\{1,\ldots,t\}\cap S$. For any preference matrix $R'_{TM}$ that is individually rational for $R_{TM}$, the objects $\{k \in K \mid R_{Nk} \neq R'_{Nk}\}$ become less over-demanded for agents $\{1,\ldots,t\}\setminus T$, and therefore the agents in $T$ get less objects as a whole. Therefore there must be at least one agent in $T$ who is worst off, and the coalition $S$ is not viable. Therefore $1 \notin S$.
	
	Now we assume that $i \notin S$ for agent $i=h-1$ and we show it holds for agent $h$. We must have that $U^\es_h<\left\vert R_{hM} \right\vert$. We assume $\phi^\es_1(R,q)_1^\es< \phi_h^\es(R,q)$ as otherwise our argument for agent $1$ works exactly the same.
	
	If agent $h \in S$, it must be that there exists a manipulation $R'$ so that $\phi_h(R',q)> \phi_h(R,q)$. The increase in her utility must come from more object shares on over-demanded objects which she was not obtaining fully, i.e. $K^h=\{k \in M \mid \exists Z \in \ZZ : 0<z_{hk}<1\}$. Some of these objects are exhausted by agents $1, \ldots, h-1$. There is no way agent $h$ could get more shares from any of those objects because $\{1, \ldots, h-1\} \cap S = \emptyset$ by our induction step.
	
	Therefore, the increase must come from objects that are not exhausted by $\{1,\ldots,h-1\}$. Those objects become less over-demanded for $\{h, \ldots, n\} \setminus S$, and therefore agents in $S$ get less object shares as a whole. It follows that there must be a agent in $S$ who gets less utility, so coalition $S$ is not viable. Therefore $h \notin S$, and this concludes the proof.
\end{proof}

\noindent {\bf Lemma 4} {\it Although ES is not IPO, there exists a selection of CCE that satisfies IPO.}
\begin{proof}
	It is straightforward to show that ES is not IPO. Let $n=5, M=\{\alpha\}, q=4$, and $R^\top=[1 \; 1\; 1\; 1\; 1]$. $\phi^\es_i(R,q)=0.8$ for any agent, but adding a perfect object $k'$ with capacity 4 for any agent $i$ changes $\phi^\es_i([R \, R_{Nk'}], (4,4))=1.75 \neq 2$. 
	
	To show that there is a selection of $\Phi^{CCE}$ that is IPO, let $(Z^*,p^*)$ be a CCE of $(R,q)$ and $([R \, R_{Nk'}], \overline q)$ be a perfect extension of $(R,q)$ . Then fix $p^*_{k'}=0$ and, for every $i \in N$ let $z^*_{ik'}=1$ if $r_{ik'}=1$, and 0 otherwise. The pair $([Z^* \, Z^*_{Nk'}],(p^*_1, \ldots, p^*_n, 0))$ is a CCE of the perfect extension $([R \, R_{Nk'}],\overline q)$, because everybody interested in the perfect object is able to afford it, and the demand for $k'$ equals its supply, because the new object $k'$ is perfect.
\end{proof}

\section*{References}

\setlength{\bibsep}{0cm}

\end{document}